\let\pref=\prettyref
\newcommand{\savehyperref}[2]{\texorpdfstring{\hyperref[#1]{#2}}{#2}}
\newtheorem{theorem}{Theorem}%[section]
\newtheorem{lemma}[theorem]{Lemma}%[theorem]
\newtheorem{corollary}[theorem]{Corollary}
\theoremstyle{definition}
\newtheorem{definition}{Definition}
\newtheorem{remark}{Remark}
\renewcommand{\Pr}{\mathop{\bf Pr\/}}
\newcommand{\E}{\mathop{\bf E\/}}
\newcommand{\Var}{\mathop{\bf Var\/}}
\newcommand{\R}{\mathbb R}
\newcommand{\N}{\mathbb N}
\newcommand{\eqdef}{\stackrel{\textrm{def}}{=}}
\newcommand{\opt}{\mathsf{opt}}
\newcommand{\eps}{\epsilon}
\newcommand{\bb}{\boldsymbol{b}}
\newcommand{\bp}{\boldsymbol{p}}
\newcommand{\bv}{\boldsymbol{v}}
\newcommand{\abs}[1]{\left\lvert #1 \right\rvert}
\newcommand{\expo}{\textsc{Exp}}
\newcommand{\Lap}{\textsc{Lap}}
\begin{document}

\begin{titlepage}
\thispagestyle{empty}
\title{The Exponential Mechanism for Social Welfare:\\ Private, Truthful, and Nearly Optimal}

\author{Zhiyi Huang\thanks{Computer and Information Science, University of Pennsylvania. Email: {\tt hzhiyi@cis.upenn.edu}. Supported in part by ONR MURI Grant N000140710907.}
\and 
Sampath Kannan\thanks{Computer and Information Science, University of Pennsylvania. Email: {\tt kannan@cis.upenn.edu}. Supported in part by an EAGER grant, NSF CCF 1137084.}}

\maketitle

\begin{abstract}
  \thispagestyle{empty}
  In this paper we show that for {\em any} mechanism design problem with the objective of maximizing social welfare, the exponential mechanism can be implemented as a {\em truthful} mechanism while still preserving differential privacy. Our instantiation of the exponential mechanism can be interpreted as a generalization of the VCG mechanism in the sense that the VCG mechanism is the extreme case when the privacy parameter goes to infinity. To our knowledge, this is the first general tool for designing mechanisms that are both truthful and differentially private.
\end{abstract}
\end{titlepage}

\section{Introduction}

In mechanism design a central entity seeks to allocate resources among a set of selfish agents in order to optimize a specific objective function such as revenue or social welfare. Each agent has a private valuation for the resources being allocated, which is commonly referred to as her {\em type}. A major challenge in designing mechanisms for problems of resource allocation among selfish agents is getting them to reveal their true types. While in principle mechanisms can be designed to optimize some objective function even when agents are not truthful, the analysis of such mechanisms is complicated and the vast majority of mechanisms are designed to incentivize agents to be truthful.

One reason that an agent might not want to be truthful is that lying
gives her a better payoff. Research in algorithmic mechanism design has
mostly focused on this possibility and has successfully designed
computationally-efficient {\em
incentive-compatible} mechanisms for many problems  , i.e., mechanisms where each agent achieves optimal payoff by
bidding truthfully (see \cite{nisan2007algorithmic} for a survey of
results).  However, a second reason that an agent might not bid
truthfully is that the {\em privacy} of her type might itself be of
value to her. Bidding truthfully could well result in an outcome that reveals the private type of an
agent. 

Consider for example, a matching market in which $n$ oil companies are
bidding for $n$ oil fields.  A company may have done extensive
research in figuring out its valuations for each field. It may regard
this information as giving it competitive advantage and seek to protect
its privacy. If it participates in a traditional
incentive-compatible mechanism, say, the VCG mechanism, it has two
choices -- 1) bid truthfully, get the optimum payoff but potentially
reveal  private information or 2)
introduce random noise into its bid to (almost) preserve privacy, but
settle for a suboptimal payoff. In this and more generally in
multi-agent settings where each agent's type is multidimensional, we
aim to answer the following question:

\begin{quote}
    \em Can we design mechanisms that simultaneously achieve near
    optimal social welfare, are incentive compatible, and protect the
    privacy of each agent?
\end{quote}

The notion of privacy we will consider is {\em differential
privacy}, which is a paradigm for private data analysis developed
in the past decade, aiming to reveal information about the population as
a whole, while protecting the privacy
of each individual (E.g., see surveys \cite{dwork2008differential,dwarksurvey}
and the reference therein).
%Roughly speaking, a differentially private mechanism is one that behaves
%almost identically on any two data sets that are almost
%identical. Here, by behaving almost identically we mean that the
%probability of any event happening changes by at most a small
%multiplicative factor. As an important tool in the literature, the
%exponential mechanism of McSherry and Talwar
%\cite{mcsherry2007mechanism} is a general mechanism that produces
%differentially private output for a large family of problems. 
%For each problem, a quality
%value is associated with each possible answer. The exponential mechanism
%then outputs an answer with probability proportional to the exponent of
%its quality scaled by the desired differential privacy and the
%sensitivity of the answer.

\subsection*{Our Results and Techniques}

Our main contribution is a novel instantiation of the exponential mechanism for {\em any} mechanism design problem with payments, that aims to maximize social welfare. We show that our version of the exponential mechanism is incentive compatible and individually rational\footnote{Here, we consider individual rationality in expectation. Achieving individual rationality in the {\em ex-post} sense is impossible for any non-trivial private mechanism since the probability of a non-zero price would have to jump by an infinitely large factor as an agent changes from zero valuation to non-zero valuation.},
% On the one hand, if we want to obtain
%ex-post individual rationality, then we can never charge positive
%prices to any zero-value agents. On the other hand, we have to charge
%positive prices to non-zero-value agents in any non-trivial truthful
%mechanism. So the probability of charging positive prices changes by a
%multiplicative factor of infinity as an agent's valuation changes from
%all-zeroes to non-zeroes. Hence, such ex-post individually rational
%mechanisms are not private at all.}, 
%and has no positive transfer\footnote{Similar to the above.}
while preserving differential privacy.
%{\bf Did we decide not to mention positive transfer at all? Sampath}
In fact, we show that the exponential mechanism can be interpreted as a natural generalization of the VCG mechanism in the sense that the VCG mechanism is the special case when the privacy parameter goes to infinity. Alternatively, our mechanism can be viewed as an affine maximum-in-distributed-range mechanism with Shannon entropy providing the offsets. We will formally define affine maximum-in-distributed-range mechanisms in \pref{sec:prelim} and more details on this observation are deferred to \pref{sec:gibbs}. Readers are referred to \cite{dobzinski2009power, dughmi2010black, dughmi2011convex,
dughmi2011truthful} for recent applications of maximum-in-distributed-range mechanisms in algorithmic mechanism design.

%To our knowledge, this is
%the first general tool for designing truthful and differentially private
%mechanism.

%{\bf Zhiyi: In the above paragraph I add a few lines about the
%connection to maximum-in-distributed range mechanism.}

%We provide two proofs of the incentive compatibility of the exponential mechanism. The first proof (\pref{sec:expo}) is 
%We also provide another very different proof in \pref{sec:gibbs} 
Our proof is by connecting the exponential mechanism to the Gibbs measure and free energy in statistical mechanics. We exploit this connection to provide a simple proof of the incentive compatibility of the mechanism. We believe this intriguing connection is of independent interest and may lead to new ways of understanding the exponential mechanism and differential privacy.
%We also present another very different proof using the classical characterization of when an allocation mechanism can be associated with prices to make it incentive-compatible. Rochet \cite{rochet1987necessary} showed that this is possible exactly in the case that the mechanism is cyclic monotone. In \pref{app:rochet}, we prove that the exponential mechanism is cyclic monotone and derive the payments according to Rochet's characterization. 

While we do not have an efficient way of computing the allocation and prices of the exponential mechanism in general (this is also not known for VCG), we do show that in special cases such as multi-item auctions and procurement auctions for spanning tree, we can efficiently implement the exponential mechanism either exactly or approximately. Further, we show that the trade-off between privacy and social welfare in the exponential mechanism is asymptotically optimal in these two cases, even if we compare to mechanisms that need not be truthful. We also include another application of the exponential mechanism for the combinatorial public project problem where the social welfare  is close to optimal for an arbitrarily small constant $\epsilon$.

Interestingly, our implementation of the exponential mechanism for multi-item auctions has further implications in the recent work on blackbox reductions in Bayesian mechanism design \cite{hartline2011bayesian, bei2011bayesian}. Combining our exponential mechanism for the matching market with the blackbox reduction procedure in \cite{hartline2011bayesian, bei2011bayesian}, we can get a blackbox reduction that converts any algorithm into BIC, differentially private mechanisms. %without hurting the social welfare too much. 
We will leave further discussions to the relevant section.

\subsection*{Related Work} 

McSherry and Talwar \cite{mcsherry2007mechanism} first proposed using differentially private mechanisms to design auctions by pointing out that differential privacy implies approximate incentive compatibility as well as resilience to collusion. In particular, they study the problem of revenue maximization in digital auctions and attribute auctions. They propose the exponential mechanism as a solution for these problems. McSherry and Talwar also suggest using the exponential mechanism to solve mechanism design problems with different objectives, such as social welfare.\footnote{The main difference between our instantiation of the exponential mechanism and that by McSherry and Talwar is that we use properly chosen payments to incentivize agents to report truthfully.} 
Their instantiation of the exponential mechanism is differentially private, but only approximately truthful. Nissim et al.~\cite{nissim2012approximately} show how to convert differentially private mechanisms into exactly truthful mechanism in some settings. However, the mechanism loses its privacy property after such conversion.  Xiao \cite{xiao2011privacy} seeks to design mechanisms that are both differentially private and perfectly truthful and proposes a method to convert any truthful mechanism into a differentially private and truthful one when the type space is small. Unfortunately, it does not seem possible to extend the results in \cite{nissim2012approximately, xiao2011privacy} to more general mechanism design problems, while our result applies to {\em any} mechanism design problem (with payments).

Xiao \cite{xiao2011privacy} also proposed to explicitly model the agents' concern for privacy in the utilities by assuming agent $i$ has a disutility that depends on the amount of information $\epsilon_i$ leaked by the mechanism. Chen et al. [7] and Nissim et al. [25] explored this direction and introduced truthful mechanisms for some specific problems. Exact evaluation of an agent's dis-utility usually requires knowledge of the types of all agents and hence this kind of mechanism can only be private if agents do not need to exactly compute their own dis-utility. The above works circumvent this issue by designing strictly truthful and sufficiently private mechanisms such that any agent's gain in privacy by lying is outweighed by the loss in the usual notion of utility, regardless of the exact value of dis-utility for privacy.

Finally, Ghosh and Roth \cite{ghosh2011selling} study the problem of
selling privacy in auctions, which can be viewed as an orthogonal
approach to combining mechanism design and differential privacy.

\section{Preliminaries}
\label{sec:prelim}

%\paragraph{Model}

A mechanism design problem is defined by a set of $n$ agents and a range $R$ of feasible outcomes. Throughout this paper we will assume the range $R$ to be discrete, but all our results can be easily extended to continuous ranges with appropriate integrability. Each agent $i$ has a private valuation $v_i : R \mapsto [0, 1]$. A central entity chooses one of the outcomes based on the agents' (reported) valuations. We will let $\bm{0}$ denote the all-zero valuation and let $v_{-i}$ denote the valuations of every agent except $i$.

For the sake of presentation, we will assume that
the agents' valuations can be any functions mapping the range of
feasible outcomes to the interval $[0, 1]$. It is worth noting that since
our mechanisms are incentive compatible in this setting, they are also 
automatically incentive compatible for more restricted valuations
(e.g., submodular valuations for a combinatorial public project problem).

A {\em mechanism} $M$ consists of an allocation rule $x(\cdot)$ and a
payment rule $p(\cdot)$. The mechanism first lets the agents submit
their valuations. However, an agent may strategically submit a false
valuation if that is beneficial to her. We will let $b_1, \dots, b_n: R
\mapsto [0, 1]$ denote the {\em reported valuations} (bids)  from the agents and
let $\bb$ denote the vector of these valuations. 
After the agents submit their bids, the allocation rule $x(\cdot)$
chooses a feasible outcome $r = x(\bb) \in R$ and the payment rule
$p(\cdot)$ chooses a vector of payments $p(\bb) \in \R^n$. 
We will let $p_i(\bb)$ denote the payment for agent $i$. 
Note that both $x(\cdot)$ and $p(\cdot)$ may be randomized. 
We will consider the standard setting of quasi-linear utility:
given the
allocation rule, the payment rule, and the reported valuations $\bb$,
for each $i \in [n]$, the {\em utility} of agent $i$ is 
$$u_i(v_i, x(\bb), p_i(\bb)) = v_i(x(\bb)) - p_i(\bb) \enspace.$$
We will assume the agents are risk-neutral and aim to maximize their expected utilities.

The goal is to design polynomial time mechanisms
$M$ that satisfy various objectives. In this paper, we will focus on
the problem of maximizing the expected {\em social
welfare}, which is defined to be the sum of the agents' valuations:
$\E[\sum_{i=1}^n v_i(x(\bb))]$.

Besides the expected social welfare, we take into consideration
the strategic play of utility-maximizing agents and their concern
about the mechanism leaking non-trivial information about their private
data. Thus, we will restrict our attention to mechanisms that
satisfy several
game-theoretic requirements and have a privacy guarantee that we
will define in the rest of this section.

\subsection{Game-Theoretical Solution Concepts} 

A mechanism is {\em incentive compatible} (IC) if truth-telling is a
dominant strategy, i.e.,  by reporting
the true values an agent always maximizes her expected utility
regardless of what other agents do - $v_i
\in \arg\max_{b_i} \E[v_i(x(b_i, b_{-i})) - p_i(b_i, b_{-i})]$.
We will also consider an approximate notion of truthfulness. 
A mechanism is {$\gamma$-incentive compatible} ($\gamma$-IC) if no agent
can get more than $\gamma$ extra utility by lying.
Further, a mechanism is {\em individually rational} (IR) if the {\em expected}
utility of each agent is always non-negative, assuming this agent
reports truthfully: $\E[v_i(x(v_i, b_{-i})) - p_i(v_i, b_{-i})] \ge 0$.
%Finally, a mechanism has {\em no positive transfer} if the {\em expected} payments are always non-negative: $\forall b_1, \dots, b_n, \forall i \in [n], \E [p(\bb)_i] \ge 0$. 
We seek to design mechanisms that are incentive compatible and
individually rational.
%, and without positive transfer.

\paragraph{Affine Maximum-In-Distributed-Range}

An allocation rule $x(\cdot)$ is an {\em affine maximum-in-distributed-range allocation} if there is a set $S$ of distributions over feasible outcomes, parameters $a_1, \dots, a_n \in \R^+$, and an offset function $c : S \mapsto \R$, such that the $x(v_1, \dots, v_n)$ always chooses the distribution $\nu \in S$ that maximizes
$$\E_{r \sim \nu} \left[ \sum^n_{i = 1} a_i v_i(r) \right] + c(\nu) \enspace.$$
In this paper, we are particularly interested in the case when $a_i = 1$, $\forall i \in [n]$, and $c$ is the Shannon entropy of the distribution scaled by an appropriate parameter.

The affine maximum-in-distributed-range mechanisms can be interpreted as
slight generalizations of the well-studied maximum-in-distributed-range
mechanisms.
If $a_i = 1$ for every $i \in [n]$ and $c(\cdot) = 0$, then such allocation rules are referred to as maximum-in-distributed-range (MIDR) allocations. There are well-known techniques for charging proper prices to make MIDR allocations and their affine generalizations incentive compatible. The resulting mechanisms are called MIDR mechanisms. MIDR mechanisms are important tools for designing computationally efficient mechanisms that are incentive compatible and approximate social welfare well (e.g., see \cite{dobzinski2009power, dughmi2010black, dughmi2011truthful, dughmi2011convex}).

\subsection{Differential Privacy} 

Differential privacy is a notion of privacy
that has been studied the most in the theoretical computer science community over the past decade. It
requires the distribution of outcomes to be nearly identical when the
agent profiles are nearly identical. Formally,

\begin{definition}
    A mechanism is {\em $\epsilon$-differentially private} if for any
    two valuation profiles $\bv = (v_1, \dots, v_n)$ and $\bv' = (v_1',
    \dots, v_n')$ such that only one agent has different valuations in
    the two profiles, and for any set of outcomes $S \subseteq R$, we
    have 
    $$\Pr[x(\bv) \in S] \le \exp(\epsilon) \cdot \Pr[x(\bv')
    \in S] \enspace.$$
\end{definition}

This definition of privacy has many appealing theoretical properties.
Readers are referred to \cite{dwork2008differential, dwarksurvey}
for excellent surveys on the subject.

%We
%want to stress that this assumption is w.l.o.g.~for, by adding
%arbitrary noise with zero mean we can obtain a payment scheme that is
%almost perfectly private without affecting our objective or any of the
%game-theoretic requirements.

We will also consider a standard variant that defines
a more relaxed notion of privacy.

\begin{definition}
    A mechanism is {\em $(\epsilon, \delta)$-differentially private} if
    for any two valuation profile $\bv = (v_1, \dots, v_n)$ and $\bv' =
    (v_1', \dots, v_n')$ such that only one agent has different
    valuations in the two profiles, and for any set of outcomes $S
    \subseteq R$, $$\Pr[x(\bv) \in S] \le \exp(\epsilon) \cdot
    \Pr[x(\bv') \in S] + \delta \enspace.$$
\end{definition}

Typically, we will consider very small values of $\delta$, say, $\delta =
\exp(-n)$. 
%This relaxed notion of differential privacy
%allows the  probability of very low-probability events to be sensitive to
%the change of a single agent's valuation.

\paragraph{Differentially Private Payment}

In the above definitions, we only consider the privacy of the allocation rule. We note that in practice, the payments need to be differentially private as well. We can handle privacy issues in the payments by the standard technique of adding Laplace noise. In particular, if the payments are implemented via secure channels (e.g., the same channels that the agents use to submit their bids) such that the each agent's payment is accessible only by the agent herself and the central entity, then adding independent Laplace noise with standard deviation $O(\epsilon^{-1})$ is sufficient to guarantee $\epsilon$-differentially private payments. Since the techniques used to handle payments are quite  standard, we will defer the extended discussion of this subject to the appendix.%\pref{app:payment}.

\subsection{The Exponential Mechanism}

One powerful tool in the differential privacy literature is the exponential mechanism of McSherry and Talwar \cite{mcsherry2007mechanism}. The exponential mechanism is a general technique for constructing differentially private algorithms over an arbitrary range $R$ of outcomes and any objective function $Q(D, r)$ (often referred to as the quality function in the differential privacy literature) that maps a pair consisting of a data set $D$ and a feasible outcome $r \in R$ to a real-valued score. In our setting, $D$ is a (reported) valuation profile and the quality function $Q(\bv, r) = \sum_{i=1}^n v_i(r)$ is the social welfare. 

Given a range $R$, a data set $D$, a quality function $Q$, and a privacy parameter $\epsilon$, the {\em exponential mechanism} $\expo(R, D, Q, \epsilon)$ chooses an outcome $r$ from the range $R$ with probability 
$$\Pr\left[\expo(R, D, Q, \epsilon) = r\right] \propto \exp\left(\frac{\epsilon}{2\Delta} Q(D, r)\right) \enspace,$$
where $\Delta$ is the Lipschitz constant of the quality function $Q$, that is, for any two adjacent data set $D_1$ and $D_2$, and for any outcome $r$, the score $Q(D_1, r)$ and $Q(D_2, r)$ differs by at most $\Delta$. In out setting, the Lipschitz constant of the social welfare function is $1$. We sometimes use $\expo(D, \epsilon)$ for short when the range $R$ and the quality function $Q$ is clear from the context. We will use the following theorem about the exponential mechanism. 

\begin{theorem}[E.g., \cite{mcsherry2007mechanism,talwardifferentially}] \label{thm:expo}
    The exponential mechanism is $\epsilon$-differentially private and
    ensures that
    $$\Pr \left[ Q(D, \expo(D, \epsilon)) < \max_{r \in R} Q(D, r) - \frac{\ln\abs{R}}{\epsilon} - \frac{t}{\epsilon} \right] \le \exp(-t) \enspace.$$
\end{theorem}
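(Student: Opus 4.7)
The plan is to handle the two claims of the theorem separately, each by a direct calculation from the definition
$$\Pr[\expo(R,D,Q,\epsilon) = r] = \frac{\exp(\epsilon Q(D,r)/(2\Delta))}{Z(D)}, \qquad Z(D) \eqdef \sum_{r' \in R} \exp(\epsilon Q(D,r')/(2\Delta)).$$

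For the privacy claim, I would fix a single outcome $r \in R$ and two adjacent profiles $D, D'$ and write the probability ratio as a product of two factors: a ``numerator'' ratio $\exp(\epsilon(Q(D,r) - Q(D',r))/(2\Delta))$ and a ``partition function'' ratio $Z(D')/Z(D)$. The numerator factor is at most $\exp(\epsilon/2)$ because $|Q(D,r) - Q(D',r)| \le \Delta$ by the Lipschitz hypothesis. For the partition function factor, the same Lipschitz bound applied term by term gives $\exp(\epsilon Q(D',r')/(2\Delta)) \le \exp(\epsilon/2) \cdot \exp(\epsilon Q(D,r')/(2\Delta))$ for every $r'$, so $Z(D') \le \exp(\epsilon/2) \cdot Z(D)$. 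Multiplying the two factors yields $\exp(\epsilon)$, and summing over any measurable $S \subseteq R$ extends the bound to sets.

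For the utility claim, let $r^\star \in \arg\max_{r} Q(D,r)$, let $Q^\star = Q(D,r^\star)$, and let $T = Q^\star - \ln|R|/\epsilon - t/\epsilon$. I would bound the probability of a ``bad'' outcome by
$$\Pr\bigl[Q(D, \expo(D,\epsilon)) < T\bigr] \;\le\; \frac{|R| \cdot \exp(\epsilon T/(2\Delta))}{Z(D)} \;\le\; \frac{|R| \cdot \exp(\epsilon T/(2\Delta))}{\exp(\epsilon Q^\star/(2\Delta))},$$
where the numerator uses a trivial union bound over at most $|R|$ outcomes each weighted at most $\exp(\epsilon T/(2\Delta))$, and the denominator is lower-bounded by keeping only the summand at $r^\star$. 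Rearranging, the right-hand side equals $\exp\bigl(\ln|R| - \epsilon(Q^\star - T)/(2\Delta)\bigr)$; substituting the definition of $T$ collapses this to $\exp(-t)$, which is the desired tail bound.

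The only subtlety I anticipate is a factor-of-two bookkeeping issue between the $2\Delta$ normalization inside $\exp(\cdot)$ and the $\epsilon/\epsilon$-scaled slack in the theorem statement: the calculation above naturally produces a slack of $2\Delta(\ln|R|+t)/\epsilon$, and the version stated in the theorem (with $\Delta = 1$ in this paper's social-welfare setting) effectively absorbs the factor of $2$ into the choice of the privacy parameter, matching the formulation used in \cite{mcsherry2007mechanism,talwardifferentially}. Aside from this constant, no further step is delicate; both claims reduce to elementary manipulations of the Gibbs-type density and its partition function, and neither uses anything beyond the Lipschitz property of $Q$.
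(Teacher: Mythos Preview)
The paper does not actually prove \pref{thm:expo}; it is stated as a known result and attributed to \cite{mcsherry2007mechanism,talwardifferentially}, with no proof given in the text. Your argument is precisely the standard one from those references: bound the pointwise probability ratio by splitting into a numerator factor and a partition-function factor (each contributing $\exp(\epsilon/2)$), and for utility combine a union bound over bad outcomes in the numerator with the single optimal term in the denominator. There is nothing to compare against in the paper itself.

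Your observation about the factor of $2$ is accurate and worth stating more plainly: with the $\exp(\epsilon Q/(2\Delta))$ normalization that makes the mechanism exactly $\epsilon$-differentially private, the utility calculation you wrote out yields a slack of $\tfrac{2\Delta}{\epsilon}(\ln|R|+t)$, not $\tfrac{1}{\epsilon}(\ln|R|+t)$. So the theorem as printed is off by a constant factor of $2$ (at $\Delta=1$) relative to what the proof actually delivers. This is a common looseness in the literature and is immaterial for every downstream use in the paper, since the additive error terms in \pref{thm:cppp}, \pref{thm:matching}, and \pref{thm:tree} are only used asymptotically; but it is not something you can make disappear by ``absorbing'' it into the privacy parameter without changing one of the two guarantees.
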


\section{The Exponential Mechanism is Incentive Compatible} 
\label{sec:expo}

In this section, we will show that if we choose the social welfare to
be the quality function, then the exponential mechanism can be
implemented in an incentive compatible and individually rational
%, and no-positive-transfer 
manner. Formally, for any range $R$ and any privacy
parameter $\epsilon > 0$, the exponential mechanism $\expo^R_\epsilon$
with its pricing scheme is presented in \pref{fig:expo}. Our main
theorem is the following:

%{\bf  Sampath: "Our instantiation" is used too often in the paper without explanation. Explain it the first time it is used, but try not to use this phrase again. Need to say that S is the Shannon entropy. Also need to decide whether we are going to talk about "no positive transfer" or not and do this consistently throughout the paper.}

\begin{figure*}[!t]
    \centering
    \fbox{\begin{minipage}{.98\textwidth}
        \begin{enumerate}
            \item Choose outcome $r \in R$ with probability 
                $\Pr[r] \propto \exp\left(\frac{\epsilon}{2} \sum_i v_i(r)
                \right)$.
            \item For $1 \le i \le n$, charge agent $i$ price
                $$p_i = - \E_{r \sim \expo^R_\epsilon(b_i,
                b_{-i})} \left[ \sum_{k \ne i} b_k(r) \right] -
                \frac{2}{\epsilon} \cdot S \left( \expo^R_\epsilon(b_i,
                b_{-i}) \right) + \frac{2}{\epsilon} \ln \left( \sum_{r
                \in R} \exp \left( \frac{\epsilon}{2} \sum_{k \ne i}
                v_k(r) \right) \right) \enspace,$$
                where $S(\cdot)$ is the Shannon entropy.
                %does not depend on agent $i$'s bid.
                %$$p_i = \E_{r \sim \expo^R_\epsilon(\bv)} [v_i(r)] - \frac{2}{\epsilon} \ln \left(\sum_{r \in R} \exp\left( \frac{\epsilon}{2} \sum_{k = 1}^n v_k(r) \right) \right) + \frac{2}{\epsilon} \ln \left( \sum_{r \in R} \exp \left( \frac{\epsilon}{2} \sum_{k \ne i} v_k(r) \right) \right) \enspace.$$
        \end{enumerate}
    \end{minipage}}
    \caption{$\expo^R_\epsilon$: the incentive-compatible exponential
    mechanism.}
    \label{fig:expo}
\end{figure*}

\begin{theorem} \label{thm:main}
    The exponential mechanism with our pricing scheme is IC and IR.
    %incentive compatible, individually rational, and has no positive transfer.
\end{theorem}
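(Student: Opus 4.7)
The plan is to rewrite agent $i$'s expected utility as, up to a term independent of her report $b_i$, the negative Helmholtz free energy of a physical system at inverse temperature $\beta = \epsilon/2$ whose energy function depends on agent $i$'s true valuation $v_i$ and the other reported bids $b_{-i}$. The Gibbs variational principle then yields IC, because the unique distribution maximizing this negative free energy (equivalently, the Gibbs measure) is exactly the allocation distribution that the mechanism produces when $b_i = v_i$. Individual rationality follows by evaluating the same functional in closed form at the truthful bid.

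For IC, fix agent $i$ with true valuation $v_i$ and others' bids $b_{-i}$, and let $q_{b_i}$ denote the distribution over $R$ induced by $\expo^R_\epsilon$ when $i$ reports $b_i$. Substituting the definition of $p_i$ into $u_i = \E_{r \sim q_{b_i}}[v_i(r)] - p_i$, the term $\tfrac{2}{\epsilon}\ln\bigl(\sum_r \exp(\tfrac{\epsilon}{2}\sum_{k\ne i} b_k(r))\bigr)$ is independent of $b_i$, so maximizing $u_i$ over $b_i$ reduces to maximizing
$$\Psi(q_{b_i}) \;:=\; \E_{r \sim q_{b_i}}\Bigl[v_i(r) + \sum_{k\ne i} b_k(r)\Bigr] + \frac{2}{\epsilon}\,S(q_{b_i}).$$
This is precisely the functional whose maximum over \emph{all} probability distributions on $R$ is attained, by the Gibbs variational principle (a one-line consequence of $D(q \,\|\, q^\star) \ge 0$), at the Gibbs measure $q^\star(r) \propto \exp\bigl(\tfrac{\epsilon}{2}(v_i(r) + \sum_{k\ne i} b_k(r))\bigr)$. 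But $q^\star$ is precisely $q_{v_i}$, the distribution the mechanism produces under truthful reporting, so the map $b_i \mapsto q_{b_i}$ realizes the global maximizer, and truth-telling is optimal.

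For IR, at $b_i = v_i$ the utility equals $\Psi(q^\star) - \tfrac{2}{\epsilon}\ln Z_{-i}$, and the standard partition-function identity $\Psi(q^\star) = \tfrac{2}{\epsilon}\ln Z$ with $Z = \sum_r \exp\bigl(\tfrac{\epsilon}{2}(v_i(r) + \sum_{k\ne i} b_k(r))\bigr)$ and $Z_{-i} = \sum_r \exp\bigl(\tfrac{\epsilon}{2}\sum_{k\ne i} b_k(r)\bigr)$ collapses the truthful utility to $u_i = \tfrac{2}{\epsilon}\ln(Z/Z_{-i})$, which is non-negative termwise because $v_i(r) \ge 0$ on $R$. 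The only non-mechanical step in the whole argument is the initial algebraic identification of $u_i$ as a free-energy functional in the correct distributional variable; once that is done, both IC and IR reduce to one-line invocations of well-known properties of the Gibbs measure, and no case analysis on the report $b_i$ is needed.
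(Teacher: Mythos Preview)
Your proof is correct and follows essentially the same route as the paper: identify agent $i$'s expected utility (up to a bid-independent constant) with the free social welfare functional and invoke the Gibbs variational principle to conclude IC. For IR the paper instead observes that bidding $\bm{0}$ yields $p_i = 0$ and then appeals to IC, whereas your closed-form evaluation $u_i = \tfrac{2}{\epsilon}\ln(Z/Z_{-i}) \ge 0$ is a slightly more direct variant of the same idea.
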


Our proof of \pref{thm:main} relies on the connection between the exponential mechanism and a well known probability measure in probability and statistical mechanics called the Gibbs measure. Once we have established this connection, the proof of \pref{thm:main} becomes very simple.

\subsection{The Exponential Mechanism and the Gibbs Measure}
\label{sec:gibbs}

The Gibbs measure, also known as the Boltzmann distribution in chemistry and physics, is formally defined as follows:

\begin{definition}[Gibbs measure]
    Suppose we have a system consisting of particles of a gas. If the particles have $k$ states $1, \dots, k$, possessing energy $E_1, \dots, E_k$ respectively, then the probability that a random particle in the system has state $i$ follows the {\em Gibbs measure}:
    $$\Pr[\text{state} = i] \propto \exp\left( - \frac{1}{k_B T} E_i \right) \enspace,$$
    where $T$ is the temperature, and $k_B$ is the Boltzmann constant.
\end{definition} 

Note that the Gibbs measure asserts that nature prefers states with lower energy level. Indeed, if $T \rightarrow 0$, then almost surely we will see a particle with lowest-energy state. On the other hand, if $T \rightarrow +\infty$, then all states are equally likely to appear. Thus the temperature $T$ is a measure of uncertainty in the system: the lower the temperature, the less uncertainty  in the system, and vice versa. 

\paragraph{Gibbs Measure vs.~Exponential Mechanism} 
It is not difficult to see the analogy between the Gibbs measure and the
exponential mechanism. Firstly, the quality $Q(r)$ of an outcome $r \in
R$ (in our instantiation, $Q(r)$ is the social welfare $\sum_i v_i(r)$)
is an analog of the energy (more precisely, the negative of the
energy) of a state $i$. In the exponential mechanism the goal is to
maximize the expected quality of the outcome, while in physics 
nature tries to minimize the expected energy. Second, the privacy
parameter $\epsilon$ is an analogue of the inverse temperature
$T^{-1}$, both measuring the level of uncertainty in the system. 
The more privacy we want in the mechanism, the
more uncertainty we need to impose in the distribution of
outcomes\footnote{We note that the privacy guarantee $\epsilon$ is not
necessarily a monotone function of the entropy of the outcome
distribution. So the statement above is only for the purpose of establishing a high-level connection
between the Gibbs measure and the exponential mechanism.}. Finally, the
Lipschitz constant $\Delta$ and Boltzmann constant $k_B$ are both
scaling factors that come from the environment. \pref{tab:gibbs}
summarize this connection between the Gibbs measure and the exponential
mechanism.

%\begin{table}[!t]
%% increase table row spacing, adjust to taste
%\renewcommand{\arraystretch}{1.3}
% if using array.sty, it might be a good idea to tweak the value of
% \extrarowheight as needed to properly center the text within the cells
%\caption{An Example of a Table}
%\label{table_example}
%\centering
%% Some packages, such as MDW tools, offer better commands for making tables
%% than the plain LaTeX2e tabular which is used here.
%\begin{tabular}{|c||c|}
%\hline
%One & Two\\
%\hline
%Three & Four\\
%\hline
%\end{tabular}
%\end{table}

\begin{table*}[!t]
    \caption{A high-level comparison between the Gibbs measure and the exponential mechanism} 
    \label{tab:gibbs}
    \centering
    \medskip
    \begin{tabular}{ccc}
        \hline \\ [-1.8ex]
        & Gibbs measure & Exponential mechanism \\ [.7ex]
        \hline \\ [-1.8ex]
        \text{Probability mass function} & 
        $\Pr[\text{state} = i] \propto \exp\left( - \frac{1}{k_B T} E_i
        \right)$ & $\Pr[\text{outcome} = r] \propto \exp\left(
        \frac{\epsilon}{2\Delta} Q(r)\right)$ \\ [.7ex]
        \text{Objective function} & $-E_i$ & $Q(r)$ \\ [.7ex]
        \text{Measure of uncertainty} &
        temperature $T$ & privacy parameter $\epsilon$ \\ [.7ex]
        \text{Environment parameter} & Boltzmann constant $k_B$ &
        Lipschitz constant $\Delta$ \\ [.7ex]
        \hline
    \end{tabular}
\end{table*}

\paragraph{Gibbs Measure Minimizes Free Energy} 

It is well-known that the Gibbs measure maximizes entropy given the expected energy. In fact, a slightly stronger claim (e.g., see \cite{le2008introduction}) states that the Gibbs measure minimizes free energy. To be precise, suppose $T$ is the temperature, $\nu$ is a distribution over the states, and $S(\nu)$ is the Shannon entropy of $\nu$. Then the {\em free energy} of the system is 
$$F(\nu, T) = \E_{i \sim \nu}[E_i] - k_B T \cdot S(\nu) \enspace.$$
The following result is well known in the statistical physics literature. 
%For completeness, we include the proof in \pref{app:gibbs}.

\begin{theorem}[E.g., see \cite{le2008introduction}]
    \label{thm:freeenergy}
    $F(\nu, T)$ is minimized when $\nu$ is the Gibbs measure.
\end{theorem}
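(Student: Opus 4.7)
The plan is to rewrite the free energy as a Kullback--Leibler divergence between $\nu$ and the Gibbs measure $\nu^*$, plus a term that does not depend on $\nu$, and then invoke the nonnegativity of KL divergence. This turns the optimization problem into a one-line inequality.

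First, I would make the Gibbs measure explicit: $\nu^*_i = Z^{-1} \exp(-E_i/(k_B T))$, where $Z = \sum_j \exp(-E_j/(k_B T))$ is the partition function. Solving for the energy yields $E_i = -k_B T (\ln \nu^*_i + \ln Z)$. Substituting this identity into the definition of $F$ and using $S(\nu) = -\sum_i \nu_i \ln \nu_i$ gives
\begin{align*}
F(\nu, T) &= \sum_i \nu_i E_i - k_B T \cdot S(\nu) \\
&= -k_B T \sum_i \nu_i \ln \nu^*_i - k_B T \ln Z + k_B T \sum_i \nu_i \ln \nu_i \\
&= k_B T \sum_i \nu_i \ln(\nu_i / \nu^*_i) - k_B T \ln Z.
\end{align*}
The first summand is $k_B T \cdot D_{KL}(\nu \,\|\, \nu^*)$, and the second is a constant independent of $\nu$.

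Next, I would invoke the classical Gibbs inequality: $D_{KL}(\nu \,\|\, \nu^*) \ge 0$, with equality if and only if $\nu = \nu^*$. This follows in one line from Jensen's inequality applied to the convex function $x \ln x$ (or to $-\ln$). Combined with the decomposition above, this immediately shows that $F(\nu, T) \ge -k_B T \ln Z$, with equality attained precisely when $\nu$ is the Gibbs measure, proving the theorem.

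There is no genuine obstacle here; the whole proof rests on spotting this algebraic rearrangement. The only minor technicality is the behavior at the boundary of the simplex, which is handled by the standard convention $0 \ln 0 = 0$; since $\nu^*$ has full support, $\ln(\nu_i/\nu^*_i)$ is always well defined whenever $\nu_i > 0$, so $D_{KL}(\nu \,\|\, \nu^*)$ is well defined for every probability distribution $\nu$.
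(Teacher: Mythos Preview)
Your proof is correct and follows essentially the same route as the paper: both rewrite $F(\nu,T)$ as $k_B T \cdot D_{KL}(\nu \,\|\, \nu^*) - k_B T \ln Z$ and then invoke nonnegativity of the KL divergence. The only cosmetic difference is that you substitute $E_i = -k_B T(\ln \nu^*_i + \ln Z)$ directly, whereas the paper reaches the same decomposition by manipulating the energy sum into $-k_B T \sum_i \nu_i \ln \nu^*_i - k_B T \ln Z$.
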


For self-containedness, we include the proof of \pref{thm:freeenergy} as
follows.

\begin{proof}
    Note that the free energy can be written as 
    \begin{eqnarray} 
        F(\nu, T) & = & \E_{i \sim \nu}[E_i] - k_B T \cdot S(\nu) \notag \\
        & = & \sum_i \Pr_\nu[i] E_i + k_B T \sum_i \Pr_\nu[i] \ln \Pr_\nu[i] \enspace. \label{eq:2}
    \end{eqnarray}

    Further, the first term of the right hand side can be rewritten as
    \begin{eqnarray}
        \sum_i \Pr_\nu[i] E_i & = & k_B T \sum_i \Pr_\nu[i] \frac{1}{k_B T} E_i \notag \\
        & = & - k_B T \sum_i \Pr_\nu[i] \ln\left(\exp\left(-\frac{1}{k_B T} E_i \right)\right) \notag \\
        & = & - k_B T \sum_i \Pr_\nu[i] \ln \left( \frac{\exp \left( - \frac{1}{k_B T} E_i \right)}{\sum_j \exp \left( - \frac{1}{k_B T} E_j \right)} \right) - k_B T \ln \left( \sum_j \exp \left( - \frac{1}{k_B T} E_j \right) \right) \notag \\
        & = & - k_B T \sum_i \Pr_\nu[i] \ln \left( \Pr_{\textit{\rm Gibbs}}[i] \right) - k_B T \ln \left( \sum_j \exp \left( - \frac{1}{k_B T} E_j \right) \right) \enspace. \label{eq:3}
    \end{eqnarray}

    By \pref{eq:2} and \pref{eq:3}, the free energy equals
    $$F(\nu, T) =  k_B T \cdot D_{KL}(\nu \, || \, \textit{\rm Gibbs}) - k_B T \ln \left( \sum_j \exp \left( - \frac{1}{k_B T} E_j \right) \right) \enspace.$$    
    Note that the second term is independent of $\nu$. By basic properties of the KL-divergence, the above is minimized when $\nu$ is the Gibbs measure.
\end{proof}

\subsection{Proof of \pref{thm:main}}

By the connection between Gibbs measure and exponential mechanism and \pref{thm:freeenergy}, we have the following analogous lemma for our instantiation of the exponential mechanism.

\begin{lemma} \label{lem:freeenergy}
    The {\em free social welfare}, 
    $$\E_{r \sim \nu}\left[\sum_i v_i(r)\right] + \frac{2}{\epsilon} \cdot S(\nu) ~,$$ 
    is maximized when $\nu = \expo^R_\epsilon(v_1, \dots, v_n)$.
\end{lemma}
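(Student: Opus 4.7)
The plan is to reduce Lemma \ref{lem:freeenergy} directly to \pref{thm:freeenergy} via the correspondence laid out in \pref{tab:gibbs}, or equivalently to mimic its proof in the mechanism-design language. Concretely, I would identify the states of the physical system with the outcomes $r \in R$, set the energy of state $r$ to be the negative social welfare $E_r = -\sum_i v_i(r)$, and set $k_B T = 2/\epsilon$. Under this substitution the Gibbs measure $\Pr[r] \propto \exp(-E_r / k_B T)$ becomes exactly $\Pr[r] \propto \exp\bigl(\tfrac{\epsilon}{2}\sum_i v_i(r)\bigr)$, which is the distribution of $\expo^R_\epsilon(v_1,\dots,v_n)$, and the free energy $F(\nu,T) = \E_{i\sim\nu}[E_i] - k_B T\cdot S(\nu)$ becomes the negation of the quantity $\E_{r\sim\nu}[\sum_i v_i(r)] + \tfrac{2}{\epsilon}S(\nu)$. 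Hence \emph{minimizing} free energy over $\nu$ is the same as \emph{maximizing} free social welfare over $\nu$, and Lemma \ref{lem:freeenergy} is immediate from \pref{thm:freeenergy}.

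If one prefers a self-contained argument, the plan is to repeat the short KL-divergence calculation from the proof of \pref{thm:freeenergy}. Writing $W(r) = \sum_i v_i(r)$ and $Z = \sum_{r\in R}\exp\bigl(\tfrac{\epsilon}{2}W(r)\bigr)$, the first step is to rewrite
\[
\E_{r\sim\nu}[W(r)] = \frac{2}{\epsilon}\sum_{r} \Pr_\nu[r]\,\ln\!\exp\!\Bigl(\tfrac{\epsilon}{2}W(r)\Bigr) = \frac{2}{\epsilon}\sum_r \Pr_\nu[r]\ln \Pr_{\expo^R_\epsilon}[r] \;+\; \frac{2}{\epsilon}\ln Z,
\]
using that $\Pr_{\expo^R_\epsilon}[r] = \exp(\tfrac{\epsilon}{2}W(r))/Z$. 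Adding $\tfrac{2}{\epsilon}S(\nu) = -\tfrac{2}{\epsilon}\sum_r \Pr_\nu[r]\ln \Pr_\nu[r]$ then collapses the two logarithmic sums into a KL divergence, giving
\[
\E_{r\sim\nu}[W(r)] + \frac{2}{\epsilon}S(\nu) \;=\; \frac{2}{\epsilon}\ln Z \;-\; \frac{2}{\epsilon}\,D_{KL}\!\bigl(\nu \,\|\, \expo^R_\epsilon\bigr).
\]
Since $\ln Z$ does not depend on $\nu$ and $D_{KL}(\nu\|\expo^R_\epsilon)\ge 0$ with equality iff $\nu = \expo^R_\epsilon$, the right-hand side is maximized precisely when $\nu$ is the exponential mechanism's distribution, which is what we wanted.

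There is no serious obstacle: the proof is essentially a bookkeeping exercise once the correspondence with the Gibbs measure is in place. The only point that needs a small amount of care is the sign/scaling conversion between minimizing free energy and maximizing free social welfare (the factor $2/\epsilon$ plays the role of $k_B T$, and social welfare is the negative of energy), but this is unambiguously determined by matching the exponents in the two distributions.
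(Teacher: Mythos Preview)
Your proposal is correct and matches the paper's approach exactly: the paper simply invokes the Gibbs-measure/exponential-mechanism correspondence together with \pref{thm:freeenergy}, which is precisely your first reduction (with $E_r = -\sum_i v_i(r)$ and $k_B T = 2/\epsilon$). Your optional self-contained KL-divergence computation is just the proof of \pref{thm:freeenergy} transcribed into the mechanism-design notation, so it is also fine and indeed yields the identity $\E_{r\sim\nu}[W(r)] + \tfrac{2}{\epsilon}S(\nu) = \tfrac{2}{\epsilon}\ln Z - \tfrac{2}{\epsilon}D_{KL}(\nu\,\|\,\expo^R_\epsilon)$ that the paper is implicitly using.
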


\paragraph{Incentive Compatibility} 

Let us consider a particular agent $i$, and fix the bids $b_{-i}$ of the
other agents. Suppose agent $i$ has value $v_i$ and bids $b_i$. For
notational convenience, we let $b(r) = \sum_{k=1}^n b_k(r)$ and let
$$h_i(b_{-i}) = \frac{2}{\epsilon} \ln \left( \sum_{r \in R}
\exp \left( \frac{\epsilon}{2} \sum_{k \ne i} v_k(r) \right) \right) \enspace.$$
Using the price $p_i$ charged to agent $i$ as in \pref{fig:expo}, her utility when she bids $b_i$ is
$$\E_{r \sim \expo^R_\epsilon(b_i, b_{-i})}[v_i(r) + \sum_{k \ne i} b_k(r)] + \frac{2}{\epsilon} \cdot S(\expo^R_\epsilon(b_i, b_{-i})) - h_i(b_{-i}) \enspace,$$
which equals the free social welfare plus a term that does not depend on
agent $i$'s bid. By \pref{lem:freeenergy}, the free social welfare is
maximized when we use the outcome distribution by the exponential
mechanism with respect to agent $i$'s true value. Therefore, 
truthful bidding is a utility-maximizing strategy for agent $i$.

\paragraph{Individual Rationality} We first note that for any agent
$i$, it is not difficult to verify that $p_i = 0$ when $v_i = \bm{0}$
regardless of bidding valuations of other agents. Therefore, by bidding
$\bm{0}$ agent $i$ could always guarantee non-negative expected utility.
Since we have shown that the exponential mechanism is
truthful-in-expectation, we get that the utility of agent $i$ when she
truthfully reports her valuation is always non-negative.

\begin{remark}
    We notice that \pref{lem:freeenergy} implies that the allocation
    rule of the exponential mechanism is affine
    maximum-in-distributed-range. As a result, there are standard
    techniques to charge prices so that the mechanisms is IC and IR as
    presented above. 
\end{remark}

%{\bf Sampath: Add a new remark that says that one could also prove this via Rochet's characterization; in particular the expo mech is cyclic monotone.}
%\begin{comment}
\begin{remark}
    Alternatively, one can prove \pref{thm:main} via the procedure
    developed by Rochet \cite{rochet1987necessary}: first prove the
    cyclic monotonicity of the exponential allocation rule, which is
    known to be the necessary and sufficient condition for being the
    allocation rule of a truthful mechanism; then derive the pricing
    scheme that rationalizes the exponential allocation rule via
    Rochet's characterization. We will omit further details of this
    proof in this extended abstract.
    %include this proof in \pref{app:rochet}.  
\end{remark}
%\end{comment}

\section{Generalization}
\label{sec:general}

In the original definition by McSherry and Talwar \cite{mcsherry2007mechanism}, the exponential mechanism is defined with respect to a prior distribution $\mu(\cdot)$ over the feasible range $R$. More precisely, the exponential mechanism given $\mu$, $\expo_{\mu}(R, D, Q, \epsilon)$, chooses an outcome $r$ from the range $R$ with probability 
$$\Pr \left[ \expo_{\mu}(R, D, Q, \epsilon) = r \right] \propto \mu(r) \exp \left(\frac{\epsilon}{2\Delta} Q(D, r) \right) \enspace.$$

When $\mu$ is chosen to be the uniform distribution over the feasible range, we recover the definition in \pref{sec:prelim}. Using a different $\mu$ can improve computational efficiency as well as the trade-off between privacy and the objective for some problems (e.g., \cite{blum2008learning}). In every use of the (generalized) exponential mechanism, to our knowledge, $\mu$ is taken to be the uniform distribution over a sub-range that forms a geometric covering of the feasible range. But in general, this need not be the optimal choice. 

We observe that our result can be extended to the above generalized exponential mechanism as well. More precisely, we can show that the generalized exponential mechanism is affine maximum-in-distributed-range as well.

\begin{theorem} \label{thm:general}
  For any range $R$, any quality function $Q$, any privacy parameter $\epsilon$, any prior distribution $\mu$, and any database $D$, the generalized exponential mechanism satisfies
  $$\expo_{\mu}(R, D, Q, \epsilon) = \arg\max_{\nu} \E_{r \sim \nu}[Q(D, r)] - \frac{2}{\epsilon} D_{KL}(\nu || \mu) \enspace.$$
\end{theorem}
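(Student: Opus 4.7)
The plan is to mimic the proof of \pref{thm:freeenergy} word for word, simply replacing the Shannon entropy $S(\nu)$ by the KL-divergence $D_{KL}(\nu\|\mu)$, so that $\mu$ plays the role of the reference measure instead of the uniform distribution on $R$ (the uniform case recovers \pref{lem:freeenergy}). The key observation driving the argument is that the Gibbs/exponential-family form of $\expo_\mu$ lets one write the quality $Q(D,r)$ as a log-ratio with respect to $\mu$.

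First I would let $\pi$ denote the distribution chosen by $\expo_\mu(R,D,Q,\epsilon)$ and let $Z = \sum_{r \in R} \mu(r)\exp\bigl(\tfrac{\epsilon}{2\Delta}Q(D,r)\bigr)$ be its normalizer, so that $\pi(r) = \mu(r)\exp\bigl(\tfrac{\epsilon}{2\Delta}Q(D,r)\bigr)/Z$. Taking logarithms and solving yields the identity
$$Q(D,r) \;=\; \tfrac{2\Delta}{\epsilon}\ln\tfrac{\pi(r)}{\mu(r)} \;+\; \tfrac{2\Delta}{\epsilon}\ln Z,$$
which is the analogue of rewriting the energy as a log-probability in the proof of \pref{thm:freeenergy}.

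Next, for an arbitrary distribution $\nu$ over $R$, I would substitute this identity into $\E_{r\sim\nu}[Q(D,r)]$ and then subtract $\tfrac{2}{\epsilon}D_{KL}(\nu\|\mu) = \tfrac{2}{\epsilon}\sum_r \nu(r)\ln\tfrac{\nu(r)}{\mu(r)}$. In the sum, the $\ln\mu(r)$ terms cancel (this is the reason $\mu$ must appear both inside $\expo_\mu$ and inside the KL-divergence), and the remaining logarithms combine into $\sum_r \nu(r)\ln\tfrac{\pi(r)}{\nu(r)} = -D_{KL}(\nu\|\pi)$. Bookkeeping the constants (taking $\Delta = 1$ to match the statement, as is the case for the social welfare setting of this paper) gives
$$\E_{r\sim\nu}[Q(D,r)] - \tfrac{2}{\epsilon}D_{KL}(\nu\|\mu) \;=\; \tfrac{2}{\epsilon}\ln Z \;-\; \tfrac{2}{\epsilon}D_{KL}(\nu\|\pi).$$
Since the first term on the right does not depend on $\nu$ and since $D_{KL}(\nu\|\pi)\ge 0$ with equality iff $\nu=\pi$ by Gibbs' inequality, the right-hand side is uniquely maximized at $\nu=\pi$, which is exactly the claim.

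I do not expect any genuine obstacle in this argument: it is essentially a routine calculation once the identity $Q(D,r) = \tfrac{2\Delta}{\epsilon}\ln(\pi(r)/\mu(r)) + \text{const}$ is in hand. The only care needed is in the bookkeeping of the multiplicative constants so that the $\mu$-dependent terms cancel exactly and the final expression assembles into $\ln Z - D_{KL}(\nu\|\pi)$ up to the correct factor of $2/\epsilon$.
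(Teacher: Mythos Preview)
Your proposal is correct and follows exactly the approach the paper intends: the paper omits the proof of \pref{thm:general}, stating only that it is ``very similar to the corresponding parts in \pref{sec:expo},'' i.e., to the proof of \pref{thm:freeenergy}/\pref{lem:freeenergy}. Your computation rewriting the objective as $\tfrac{2}{\epsilon}\ln Z - \tfrac{2}{\epsilon}D_{KL}(\nu\|\pi)$ is precisely the $\mu$-weighted analogue of that free-energy argument, and the cancellation of the $\ln\mu(r)$ terms you highlight is the only new ingredient.
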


\begin{corollary} \label{cor:general}
  For any mechanism design problem for social welfare and any prior distribution $\mu$ over the feasible range, the generalized exponential mechanism (w.r.t.~$\mu$) is IC and IR with appropriate payment rule.
\end{corollary}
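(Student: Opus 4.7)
The plan is to derive \pref{cor:general} from \pref{thm:general} in exactly the same way that \pref{thm:main} follows from \pref{lem:freeenergy}, with the Shannon entropy $S(\nu)$ appearing in \pref{fig:expo} replaced by the negative KL-divergence $-D_{KL}(\nu \,||\, \mu)$. Specializing \pref{thm:general} to the social welfare quality function $Q(\bb,r) = \sum_k b_k(r)$ (whose Lipschitz constant is $\Delta = 1$) exhibits the generalized exponential allocation as an affine maximum-in-distributed-range rule with parameters $a_i = 1$ and offset $c(\nu) = -\frac{2}{\epsilon} D_{KL}(\nu \,||\, \mu)$. Given this characterization, a VCG-style pivot payment delivers IC and IR.

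Concretely, I would charge agent $i$
\[
p_i(\bb) = -\E_{r \sim \expo_\mu(\bb)}\!\left[\textstyle\sum_{k \ne i} b_k(r)\right] + \frac{2}{\epsilon}\, D_{KL}\!\left(\expo_\mu(\bb) \,||\, \mu\right) + h_i(b_{-i}),
\]
with $b_i$-independent pivot $h_i(b_{-i}) = \frac{2}{\epsilon} \ln \sum_{r \in R} \mu(r) \exp\!\bigl(\frac{\epsilon}{2} \textstyle\sum_{k \ne i} b_k(r)\bigr)$, the log-partition function of the generalized exponential mechanism after replacing agent $i$'s bid by $\bm{0}$. Agent $i$'s expected utility when reporting $b_i$ against $b_{-i}$ is then
\[
\E_{r \sim \expo_\mu(b_i, b_{-i})}\!\Bigl[v_i(r) + \textstyle\sum_{k \ne i} b_k(r)\Bigr] - \frac{2}{\epsilon}\, D_{KL}\!\left(\expo_\mu(b_i, b_{-i}) \,||\, \mu\right) - h_i(b_{-i}).
\]
Since $h_i(b_{-i})$ does not depend on $b_i$, and the remaining expression is exactly the functional that \pref{thm:general} (applied to the profile $(v_i,b_{-i})$) shows is maximized over all distributions $\nu$ at $\nu = \expo_\mu(v_i, b_{-i})$, truthful bidding is a dominant strategy.

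For individual rationality I would verify that $p_i = 0$ when $v_i = \bm{0}$. Writing $\nu_0 = \expo_\mu(\bm{0}, b_{-i})$ and $Z = e^{(\epsilon/2)\, h_i(b_{-i})}$ so that $\nu_0(r) = \mu(r) \exp\!\bigl(\frac{\epsilon}{2}\sum_{k \ne i} b_k(r)\bigr)/Z$, a one-line expansion gives $\frac{2}{\epsilon} D_{KL}(\nu_0 \,||\, \mu) = \E_{r \sim \nu_0}[\sum_{k \ne i} b_k(r)] - h_i(b_{-i})$, which makes the three summands of $p_i$ cancel exactly. Thus bidding $\bm{0}$ always produces zero payment and non-negative expected utility, and IC lifts this to non-negative expected utility under truthful reporting.

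The argument is essentially forced once \pref{thm:general} is in hand: read off the affine-MIDR structure, apply the standard pivot-price template, and verify. The only mildly delicate step — and the one place where I would expect to pay attention — is picking $h_i$ so that it is simultaneously $b_i$-independent (to preserve IC) and sets $p_i = 0$ at the zero bid (to give IR); the log-partition choice above does both, mirroring the role of the corresponding constant in \pref{fig:expo}.
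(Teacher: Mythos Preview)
Your proposal is correct and follows exactly the route the paper intends: it mirrors the proof of \pref{thm:main} from \pref{lem:freeenergy}, with Shannon entropy replaced by $-D_{KL}(\cdot\,\|\,\mu)$ via \pref{thm:general}, and the pivot term chosen as the log-partition function of $\expo_\mu(\bm{0},b_{-i})$ so that $p_i$ vanishes at the zero bid. The paper itself omits the details, saying only that the derivation is ``very similar to the corresponding parts in \pref{sec:expo},'' and your write-up is precisely that parallel argument.
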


The proof of \pref{thm:general} and deriving the pricing scheme in \pref{cor:general} is very similar to the corresponding parts in \pref{sec:expo} and hence omitted.

\section{Applications}

Our result in \pref{thm:main} applies to a large family of problems. 
In fact, it can be used to derive truthful and differentially private
mechanisms for any problem in mechanism design (with payments) that aims
for social welfare maximization.

In this section, we will consider three examples -- the combinatorial
public project problem (CPPP), the multi-item auction, and the
procurement auction for a spanning tree. The exponential mechanism for
the combinatorial public project problem is incentive compatible,
$\epsilon$-differentially private, and achieves nearly optimal social
welfare for any constant $\epsilon > 0$. However, we cannot implement
the exponential mechanism in polynomial time for CPPP in general because
implementing VCG for CPPP is known to be $\textit{\bf NP}$-hard and the
exponential mechanism is a generalization of VCG. For the other two
applications, we manage to implement the exponential mechanism in
polynomial time, where the implementation for multi-item auction is only
approximate so that it is only approximately truthful and approximately
differentially private, and the implementation for procurement auction
for spanning trees is exact. The social welfare for these two cases,
however, is nearly optimal only when the privacy parameter $\epsilon$ is
super-constantly large. Nonetheless, we show that the trade-offs
between privacy and social welfare of the exponential mechanism in these
two applications are asymptotically optimal.
%We will analyze the computational efficiency, including how to
%efficiently choose an outcome from the desired distribution and how to
%efficiently generate the payment. We will also consider the trade-off
%between social welfare and privacy of our instantiation of the
%exponential mechanism.

\subsection{Combinatorial Public Project Problem}

The first interesting application of our result is a truthful and
differentially private mechanism for the Combinatorial Public Project
Problem (CPPP) originally proposed by Papadimitriou et
al.~\cite{papadimitriou2008hardness}. In
CPPP, there are $n$ agents and $m$ public projects. Each agent $i$ has a
private valuation function $v_i$ that specifies agent $i$'s value
(between $0$ and $1$) for every subset of public projects. The objective
is to find a subset $S$ of public projects  to build, of size at most $k$ (a
parameter), that maximizes the social welfare, namely, $\sum_i
v_i(S)$.

This problem has received a lot of attention in the algorithmic game
theory literature because strong lower bounds can be shown for the
approximation ratio of this problem by any truthful mechanism when the
valuations are submodular (e.g., see \cite{papadimitriou2008hardness,
dughmi2011limitations}).

Further, the CPPP is of practical interest as well. The following is a
typical CPPP scenario in the real world. Suppose some central entity
(e.g., the government) wants to
build several new hospitals where there are $m$ potential locations to
choose from. Due to resource constraints, the government can only build
$k$ hospitals. Each citizen has a private value for each subset
of locations that may depend on the distance to the closest hospital
and the citizen's health status. 

Note that the agents may be concerned about their privacy if they choose to
participate in the mechanism because their valuations typically contain
sensitive information. For
example, the citizens who have high values for having a hospital close by
in the above scenario are more likely to have health problems.
Therefore, it would be interesting to design mechanisms for the CPPP that
are not only truthful but also differentially private. The
size of the range of outcomes is ${m \choose k} = O(m^k)$. So by
\pref{thm:expo} and \pref{thm:main}, we have the following.

\begin{theorem} \label{thm:cppp}
    For any $\epsilon > 0$, the exponential mechanism
    $\expo^{\textit{\rm CPPP}}_{\epsilon}$ for CPPP is IC,
    $\epsilon$-differentially private, and ensures 
    $$\Pr\left[\sum_{i=1}^n v_i \left(\expo^{\textit{\rm
    CPPP}}_\epsilon \right) < \opt - \frac{k \ln m}{\epsilon} -
    \frac{t}{\epsilon} \right] \le \exp(-t) \enspace.$$
\end{theorem}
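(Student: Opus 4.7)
The statement is essentially a direct corollary of Theorem~\ref{thm:main} and Theorem~\ref{thm:expo}, so the plan is to unpack the three conclusions in turn, verifying that CPPP fits the hypotheses of both theorems with the promised parameters.

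First, I would set up CPPP inside the general framework of Section~\ref{sec:prelim}: the range is $R = \{S \subseteq [m] : |S| \le k\}$, each agent's type is a function $v_i : R \to [0,1]$, and the quality function is the social welfare $Q(\bv, S) = \sum_{i=1}^n v_i(S)$. Since every $v_i$ takes values in $[0,1]$, changing a single agent's valuation alters $Q(\bv, S)$ by at most $1$ for each fixed $S$, so the Lipschitz constant of $Q$ is $\Delta = 1$. With this identification, the mechanism $\expo^{\textrm{CPPP}}_\epsilon$ is exactly the instantiation of $\expo^R_\epsilon$ from Figure~\ref{fig:expo} with $R$ as above, so Theorem~\ref{thm:main} immediately yields IC (and IR).

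Second, $\epsilon$-differential privacy of the allocation rule follows from the first part of Theorem~\ref{thm:expo}, again using $\Delta = 1$. For the accuracy part, I would plug into the tail bound of Theorem~\ref{thm:expo} the cardinality
\[
|R| = \binom{m}{k} \le m^k, \qquad \text{so} \qquad \ln |R| \le k \ln m,
\]
and observe $\max_{S \in R} Q(\bv, S) = \opt$. Substituting these into the statement of Theorem~\ref{thm:expo} gives the claimed bound
\[
\Pr\left[\textstyle\sum_{i=1}^n v_i(\expo^{\textrm{CPPP}}_\epsilon) < \opt - \frac{k\ln m}{\epsilon} - \frac{t}{\epsilon}\right] \le \exp(-t).
\]

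There is essentially no obstacle: everything is a mechanical instantiation once the range size and Lipschitz constant are identified. The only thing worth flagging in the write-up is that Theorem~\ref{thm:expo} as stated in the excerpt has a $\ln |R|/\epsilon$ term (rather than $2\ln|R|/\epsilon$), which is consistent with the stated bound when $\Delta = 1$; I would quickly check that the constants line up before claiming the corollary, but no new argument is needed beyond what the two earlier theorems already supply.
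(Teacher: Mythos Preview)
Your proposal is correct and matches the paper's approach exactly: the paper simply notes that the range has size ${m \choose k} = O(m^k)$ and invokes \pref{thm:expo} and \pref{thm:main} without further argument. Your write-up is a faithful (and slightly more detailed) unpacking of that one-line justification.
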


It is known that the exponential mechanism achieves the optimal trade-off between privacy and social welfare for CPPP (e.g., \cite{talwardifferentially}).

Further, note that the optimal social welfare could be as large as $n$. Moreover, the number of projects $k \le m$ is typically much smaller than the number of agents $n$. Therefore, the exponential mechanism achieves social welfare that is close to optimal. However, it is worth noting  that we only requires $k$ and $m$ to be mildly smaller than $n$ (e.g., $O(n^{1-c})$ for any small constant $c > 0$), in which cases the size of the type space, which is exponential in $k$ and $m$, is still quite large so that the approach in \cite{xiao2011privacy} does not apply.

In some scenarios such as the one above  where the government wants to
build a few new hospitals, $k$ is sufficiently small so that it is
acceptable to have running time polynomial in the size of the range of
outcomes. In such cases, it is easy to see that the exponential
mechanism for CPPP can be implemented in time polynomial in $n$ and
${m \choose k}$. 
%Unfortunately, we cannot implement the exponential
%mechanism in time polynomial in $n$, $m$, and $k$ for the general case
%because the exponential mechanism is a generalization of the VCG
%mechanism, which is known to be {\bf NP}-hard to implement for the CPPP. 

\subsection{Multi-Item Auction}

Next we consider a multi-item auction. Here,
 the auctioneer has $n$ heterogeneous
items (one copy of each item) that she wishes to allocate to $n$
different agents\footnote{The case when the number of items is not
the same as the number of agents can be reduced to this case by adding
dummy items or dummy agents. So our setting is w.l.o.g.}. Agent $i$
has a private valuation $\bv_i = (v_{i1}, \dots, v_{ik})$, where
$v_{ij}$ is her
value for item $j$. We will assume the agents are unit-demand, that is,
each agent wants at most one item. It is easy to see that each
feasible allocation of the multi-item auction is a matching between
agents and items. We will let the $R_M$ denote the range of multi-item
auction, that is, the set $\Pi_n$ of all permutations on $[n]$.

The multi-item auction and related problems are
very well-studied in the algorithmic game theory literature
(e.g., \cite{chawla2009sequential,bhattacharya2010budget}).
They capture the motivating scenario of 
allocating oil fields and many other problems that arise from
allocating public resources.
The VCG mechanism can be implemented in polynomial time to
maximize social welfare in this problem since max-matching can be solved
in polynomial time. The new twist in our setting is 
to design mechanisms that are {\em both truthful and differentially
private} and have good social welfare guarantee.

\paragraph{Approximate Implementation of the Exponential Mechanism}
Unfortunately, exactly sampling matchings according to the distribution
specified in the exponential mechanism seems hard due to its
connection to the problem of computing the permanent of non-negative
matrices (e.g., see \cite{jerrum1989approximating}), which is
$\#P$-complete. Instead, we will
sample from the desired distribution approximately. Moreover, we show
that there is an efficient approximate implementation of the payment
scheme. As a result of the non-exact implementation, we only get
$\gamma$-IC instead of perfect IC, $(\epsilon, \delta)$-differential
privacy instead of $\epsilon$-differential privacy, and lose an
additional $n \gamma$ additive factor in social welfare. Here, $\gamma$
will be inverse polynomially small. The discussion of this approximate
implementation of the exponential mechanism is deferred to the full version.

\medskip 

Note that the size of the range of feasible outcomes of multi-item
auction is $n!$. By \pref{thm:expo}, we have the following:

\begin{theorem} \label{thm:matching}
    For any $\delta \in (0, 1)$, $\epsilon > 0$, $\gamma > 0$, there is
    a polynomial time (in $n$, $\epsilon^{-1}$, $\gamma^{-1}$, and
    $\log(\delta^{-1})$) approximate implementation of the exponential
    mechanism, $\widehat{\expo}^{R_M}_\epsilon$ that is
    $\gamma$-IC, $(\epsilon,\delta)$-differentially private, and
    ensures that 
    $$\Pr\left[\sum_{i=1}^n v_i \left( \widehat{\expo}^{R_M}_\epsilon \right) < \opt - \gamma n - \frac{\ln(n!)}{\epsilon} - \frac{t}{\epsilon} \right] \le \exp(-t) \enspace.$$
\end{theorem}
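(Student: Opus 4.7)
The plan is to implement the exponential mechanism approximately via the Jerrum--Sinclair--Vigoda (JSV) FPRAS for the permanent of a non-negative matrix. The key observation is that, for bids $\bb$, sampling a perfect matching $\sigma \in \Pi_n$ with probability proportional to $\exp(\tfrac{\epsilon}{2}\sum_i b_{i,\sigma(i)}) = \prod_i w_{i,\sigma(i)}$, where $w_{ij} \eqdef \exp(\tfrac{\epsilon}{2} b_{ij})$, is \emph{exactly} the sampling problem associated with the permanent of the matrix $W=(w_{ij})$. Since each $b_{ij}\in[0,1]$, every entry $w_{ij}$ lies in $[1, e^{\epsilon/2}]$, so JSV produces a sample $\hat{\sigma}$ with total variation distance at most $\delta'$ from the exponential-mechanism distribution in time polynomial in $n$, $\epsilon$, and $\log(1/\delta')$. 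This yields the approximate allocation rule $\hat{x}$.

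For the payment, I would first use the Gibbs identity $S(\nu) = \log Z - \tfrac{\epsilon}{2}\E_\nu[\sum_k b_k(r)]$ (where $Z = \sum_r \exp(\tfrac{\epsilon}{2}\sum_k b_k(r))$) to rewrite the payment from \pref{fig:expo} in the cleaner form
\[
p_i \;=\; \E_{\nu}\!\left[b_i(r)\right] \;-\; \frac{2}{\epsilon}\log\!\left(\frac{Z_{\bb}}{Z_{\bb,-i}}\right),
\]
where $Z_{\bb,-i}$ is obtained by replacing row $i$ of $W$ with the all-ones row. Both $Z_{\bb}$ and $Z_{\bb,-i}$ are permanents of non-negative matrices, so the FPRAS gives $(1\pm\eta)$ multiplicative approximations, from which $\log(Z_{\bb}/Z_{\bb,-i})$ is recovered with additive error $O(\eta)$. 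The expectation $\E_\nu[b_i(r)]$ is estimated by averaging $b_i(\hat{\sigma}^{(\ell)})$ over $O(\gamma^{-2}\log(1/\delta''))$ independent JSV samples; Hoeffding combined with the TV bound then gives additive error $O(\gamma)$ w.h.p. Setting $\eta=\Theta(\epsilon\gamma)$ makes the payment error $O(\gamma)$.

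Given these two ingredients, the three claimed guarantees follow by standard compositional arguments. For privacy, the exact exponential mechanism is $\epsilon$-DP by \pref{thm:expo}, and a coupling argument shows that replacing the distribution by one within TV distance $O(\delta)$ yields $(\epsilon,\delta)$-DP after adjusting constants. For $\gamma$-IC, \pref{thm:main} gives that the exact mechanism's expected utility is maximized at truth-telling; since the approximate utility differs from the exact utility by at most (TV distance)$\cdot\|v_i\|_\infty + $ (payment error) $= O(\gamma)$ uniformly in the reported bid, any deviation improves utility by at most $\gamma$. For social welfare, I combine \pref{thm:expo} applied with $|R_M|=n!$ with the at-most-$\gamma n$ additive slack introduced by a TV-$\gamma$ perturbation of the allocation (welfare is bounded by $n$).

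The main obstacle is the careful bookkeeping of approximation errors across the three subroutines — sampling the matching, estimating $\E_\nu[b_i(r)]$, and estimating the ratio $Z_{\bb}/Z_{\bb,-i}$ — and ensuring the parameters can be tuned so that the composite running time remains polynomial in $n,\epsilon^{-1},\gamma^{-1},\log(\delta^{-1})$ while simultaneously realizing all three targets. The subtlest point is that the multiplicative FPRAS error $\eta$ for the partition functions is amplified by a factor of $1/\epsilon$ in the payment (because of the $\tfrac{2}{\epsilon}\log(\cdot)$ term), so $\eta$ must be chosen proportional to $\epsilon\gamma$; this does not break the polynomial dependence, but it drives the choice of parameters everywhere else in the argument.
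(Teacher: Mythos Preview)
Your approach is essentially the paper's: reduce both sampling and the log-partition terms in the price to permanents and invoke the Jerrum--Sinclair--Vigoda FPRAS, using the same rewriting $p_i = \E_\nu[b_i(r)] - \tfrac{2}{\epsilon}\log(Z_{\bb}/Z_{\bb,-i})$.

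The one substantive difference is in the sampler guarantee. You invoke JSV as a black-box sampler that is $\delta'$-close in \emph{total variation}; the paper instead builds the sampler by self-reducibility (compute each marginal $\Pr[\pi(1)=j]\propto \mathrm{perm}(A_{-1,-j})$ with the FPRAS, sample $\pi(1)$, recurse), which yields a \emph{pointwise multiplicative} guarantee $\hat P[r]\in[e^{-\gamma},e^{\gamma}]\,P[r]$ on the event that all FPRAS calls succeed. This distinction matters for the welfare tail bound: with a multiplicative guarantee one gets $\hat P[\text{bad}]\le e^{\gamma}P[\text{bad}]\le e^{\gamma}e^{-t}$, preserving the $\exp(-t)$ shape, whereas a TV-$\gamma$ coupling only gives $\hat P[\text{bad}]\le P[\text{bad}]+\gamma$. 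Your ``at most $\gamma n$ additive slack'' argument is correct for the \emph{expected} welfare (since welfare is bounded by $n$), but it does not justify shifting the threshold in a high-probability statement; under TV alone the right-hand side becomes $\exp(-t)+\gamma$, not $\exp(-t)$ at a shifted threshold. The same multiplicative-vs-TV distinction also makes the paper's route to $(\epsilon,\delta)$-privacy slightly cleaner: conditioned on FPRAS success the mechanism is $(\epsilon+2\gamma)$-pure-DP, and the $\delta$ comes solely from the FPRAS failure probability. None of this changes the overall architecture of your argument; it just means you should build the sampler via marginals (or otherwise upgrade to a pointwise ratio bound) rather than appeal to a bare TV guarantee.
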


Note that here we are achieving $\gamma$-IC and $(\epsilon, \delta)$-differentially privacy while in the instantiation of the exponential mechanism by McSherry and Talwar \cite{mcsherry2007mechanism} is $\epsilon$-IC and $\epsilon$-differentially private. Our result in \pref{thm:matching} is better in most applications since typically $\eps$ is large, usually a constant or occasionally a super-constant, while $\gamma$ is small, usually requires to be $1 / \textrm{poly}$ for $\gamma$-IC to be an appealing solution concept.
%{\bf Sampath: Is delta the same as gamma or different? The discussion above the theorem suggests they are the same but the theorem treats them as different.}

The trade-off between privacy and social welfare in \pref{thm:matching} can be interpreted as the follows: if we want to achieve social welfare that is worse than optimal by at most an $O(n)$ additive term, then we need to choose $\epsilon = \Omega(\log n)$. The\ next theorem shows that this is tight. The proof is deferred to the full version.

\begin{theorem} \label{thm:lbmultiitem}
    Suppose $M$ is an $\epsilon$-differentially private mechanism for
    the multi-item auction problem and the expected welfare achieve by $M$
    is at least $\opt - \frac{n}{10}$. Then $\epsilon = \Omega(\log n)$.
\end{theorem}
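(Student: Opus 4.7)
The plan is a standard packing-against-privacy lower bound, instantiated on a family of unit-demand permutation instances where the mechanism must essentially recover a hidden permutation.

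First I set up the test ensemble. For each permutation $\sigma \in S_n$, let $v_\sigma$ be the profile in which agent $i$ values item $\sigma(i)$ at $1$ and every other item at $0$. Then $\opt(v_\sigma) = n$, achieved by the matching $\sigma$, and for any output matching $\pi \in S_n$ the realized welfare equals $n - d_H(\pi,\sigma)$, where $d_H$ denotes Hamming distance on $S_n$. The welfare hypothesis gives $\E_M[d_H(M(v_\sigma), \sigma)] \le n/10$, and Markov's inequality yields, for every $\sigma$,
\[
\Pr\bigl[M(v_\sigma) \in B_\sigma\bigr] \ge \tfrac{1}{2}, \qquad B_\sigma := \{\pi \in S_n : d_H(\pi,\sigma) \le n/5\}.
\]

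Next I build a packing. Greedily select $\sigma_1,\dots,\sigma_K \in S_n$ with pairwise Hamming distance strictly greater than $2n/5$, so that the balls $B_{\sigma_j}$ are pairwise disjoint by the triangle inequality. Maximality together with a standard covering-from-packing argument gives $K \ge n!/|B^{2n/5}|$, where $B^r$ denotes any ball of radius $r$ (all balls have the same size by transitivity of the $S_n$-action). Counting exactly, the number of $\pi$ at distance exactly $k$ from $\sigma$ is $\binom{n}{k} D_k \le n!/(n-k)!$, where $D_k$ is the number of derangements on $k$ symbols, so $|B^{2n/5}| \le (2n/5 + 1) \cdot n!/(3n/5)!$. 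A Stirling estimate $(3n/5)! \ge ((3n/5)/e)^{3n/5}$ then yields $K \ge \exp(\Omega(n \log n))$.

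Finally I close the loop with group privacy: by iterating $\epsilon$-differential privacy at most $n$ times along any coordinate-swap sequence, any two profiles differing on at most $n$ agents satisfy $\Pr[M(v) \in S] \ge e^{-n\epsilon}\Pr[M(v') \in S]$. Applied to $v_{\sigma_1}$ versus each $v_{\sigma_j}$, this gives $\Pr[M(v_{\sigma_1}) \in B_{\sigma_j}] \ge \tfrac{1}{2} e^{-n\epsilon}$, and summing over the disjoint $B_{\sigma_j}$ forces $K \cdot \tfrac{1}{2} e^{-n\epsilon} \le 1$. Rearranging, $n\epsilon \ge \log(K/2) = \Omega(n \log n)$, i.e., $\epsilon = \Omega(\log n)$. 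The main technical obstacle is the sphere-packing estimate: one has to see that Hamming balls of radius $cn$ in $S_n$ occupy only an $\exp(-\Omega(n \log n))$ fraction of $S_n$, which is exactly the extra $\log n$ factor in the code-size lower bound that drives $\epsilon = \Omega(\log n)$ rather than the $\Omega(1)$ one would get from packing against a constant-size alphabet.
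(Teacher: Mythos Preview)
Your proof is correct, but it takes a genuinely different route from the paper's argument. The paper draws a \emph{random} single-minded profile (each agent independently picks a uniformly random favorite item), observes that $\E[\opt] = (1-e^{-1})n$ so $M$ must get expected welfare $> n/2$, and then does a simple averaging argument: on average an agent receives her critical item with probability at least $1/2$, whereas on average she receives any particular non-critical item with probability at most $\frac{1}{2(n-1)}$. Pigeonhole then produces a single pair $(i,j)$ on which these two probabilities differ by a factor $\ge n-1$; since the two conditioning events differ only in agent $i$'s valuation, \emph{one} application of $\epsilon$-differential privacy gives $e^\epsilon \ge n-1$.

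Your approach is the standard packing-vs-privacy template: a Hamming packing in $S_n$ of size $K = \exp(\Omega(n\log n))$, Markov to land in the home ball with probability $\ge 1/2$, and \emph{group} privacy across all $n$ coordinates to transport mass, giving $n\epsilon \ge \log(K/2)$. The combinatorics (ball size $\le (2n/5+1)\,n!/(3n/5)!$ and Stirling) are handled correctly. Compared to the paper, your argument is heavier---it needs the packing count and $n$-fold group privacy---but it is a more portable template and makes the information-theoretic content explicit. The paper's argument is more elementary: no Stirling, no packing, and crucially it uses only a \emph{single-neighbor} privacy comparison rather than group privacy, which is a slightly stronger statement about where the privacy constraint bites.
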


Note that in this  theorem, we do not restrict $M$ to be incentive
compatible. In other word, this lower bound holds for arbitrary
differentially private mechanisms. So there is no extra cost for 
imposing the truthfulness constraint.

\paragraph{Implication in BIC Blackbox Reduction} 

Recently, Hartline et al.~\cite{hartline2011bayesian} and Bei and Huang \cite{bei2011bayesian} introduce blackbox reductions that convert any algorithm into nearly Bayesian incentive-compatible mechanisms with only a marginal loss in the social welfare. Both approach essentially create a virtual interface for each agent which has the structure of a matching market and then run VCG in the virtual matching markets. By running the exponential mechanism instead of the VCG mechanism, we can obtain a blackbox reduction that converts any algorithm into a nearly Bayesian incentive-compatible and differentially private mechanism. We will defer more details to the full version of this paper.

\subsection{Procurement Auction for Spanning Trees}

Another interesting application is the procurement auction for a spanning tree (e.g., see \cite{cary2008auctions}). Procurement auctions (also known as reverse auctions) are a type of auction where the roles of buyers and sellers are reversed. In other word, the central entity seeks to buy, instead of sell, items or services from the agents. In particular in the procurement auction for spanning trees, consider $n = {k \choose 2}$ selfish agents own edges in a publicly known network of $k$ nodes. We shall imagine the nodes to be cities and the edges as potential highways connecting cities. Each agent $i$ has a non-negative cost $c_i$ for building a highway along the corresponding edge. The central entity (e.g., the government) wants to purchase a spanning tree from the network so that she can build highways to connect the cities. The goal is to design incentive compatible and differentially private mechanisms that provide good social welfare (minimizing total cost). 

Although this is a reverse auction in which agents have costs instead
of values and the payments are from the central entity to the
agents, by interpreting the costs as the negative of the valuations
(i.e.~$v_i = -c_i$ if the edge is purchased and $v_i = 0$ otherwise), we
can show that the exponential mechanism with the same payment scheme is
incentive compatible for procurement auctions via almost identical
proofs. We will omit the details in this extended abstract. 

Next, we will discuss how to efficiently implement the exponential
mechanism.

\paragraph{Sampling Spanning Trees} There has been a large body of
literature on sampling spanning tree (e.g., see
\cite{kulkarni1990generating} and the reference therein). Recently,
Asadpour et al.~\cite{asadpour2010log} have developed a polynomial time
algorithm for
sampling {\em entropy-maximizing} distributions, which is exactly the kind of
distribution used by the exponential mechanism. Therefore, the
allocation rule of the exponential mechanism can be implemented in
polynomial time for the spanning tree auction.

\paragraph{Implicit Payment Scheme by Babaioff, Kleinberg, and Slivkins \cite{babaioff2010truthful}} 
Although we can efficiently generate samples from the desired
distribution, it is not clear how to compute the exact payment
explicitly. Fortunately, Babaioff et
al.~\cite{babaioff2010truthful, bobby} provide a general method of 
computing an unbiased estimator for the payment given any rationalizable
allocation rule\footnote{Although the result in
\cite{babaioff2010truthful} only applies to single-parameter problems,
Kleinberg \cite{bobby} pointed out the same approach can be extended to
multi-parameter problems if the type space is convex.}. Hence, we
can use the implicit payment method in \cite{babaioff2010truthful,
bobby} to generate the payments in polynomial time.

\medskip 

Note that the size of the range of feasible outcomes of spanning tree
auction is the number of different spanning tree in a complete graph
with $k$ vertices, which equals $k^{k-2}$. By \pref{thm:expo} we have
the following:

\begin{theorem} \label{thm:tree}
    For any $\epsilon > 0$, the exponential mechanism
    $\expo^{\textrm{tree}}_\epsilon$ runs in polynomial time (in $k$ and
    $\epsilon^{-1}$), is IC, $\epsilon$-differentially private, and
    ensures that 
    \begin{align*}
      \Pr\left[\sum_{i=1}^n c_i \left( \widehat{\expo}^{\text{tree}}_\epsilon \right) > \opt + \frac{(k-2) \log k}{\epsilon} + \frac{t}{\epsilon} \right] \\
      \le \exp(-t) \enspace.
    \end{align*}
\end{theorem}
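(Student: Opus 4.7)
The plan is to dispatch all four claims by reducing to earlier results. For incentive compatibility, first translate the reverse auction to the standard setting via $v_i = -c_i$ when edge $i$ is purchased (and $v_i = 0$ otherwise). Then the exponential allocation rule of \pref{fig:expo} becomes $\Pr[T] \propto \exp\bigl(-\frac{\epsilon}{2}\sum_{i \in T} c_i\bigr)$, and the Gibbs/free-energy argument of \pref{sec:expo}, in particular \pref{lem:freeenergy}, carries over verbatim: with the analogous payment rule (with $v$ replaced by $-c$ throughout), truthful bidding maximizes each agent's expected utility. Individual rationality follows from the observation that reporting $c_i = 0$ yields $p_i = 0$, so any agent can secure nonnegative expected utility, and by IC the truthful report does at least as well.

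Differential privacy and the utility bound both follow from \pref{thm:expo} applied to the quality function $Q(c, T) = -\sum_{i \in T} c_i$. Since costs lie in $[0,1]$, altering one agent's report changes $Q$ by at most $1$ on any outcome, so the Lipschitz constant is $\Delta = 1$, yielding $\epsilon$-differential privacy directly. By Cayley's formula the complete graph on $k$ vertices has $k^{k-2}$ spanning trees, so $\ln|R| = (k-2)\ln k$; substituting into the tail bound of \pref{thm:expo} gives exactly the stated inequality.

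The remaining step is the polynomial-time implementation. The distribution factorizes as $\exp\bigl(-\frac{\epsilon}{2}\sum_{i \in T} c_i\bigr) = \prod_{e \in T} w_e$ with $w_e = \exp(-\frac{\epsilon}{2} c_e)$, so the exponential mechanism is sampling a spanning tree with probability proportional to the product of its edge weights. This is precisely the kind of entropy-maximizing distribution over spanning trees handled by the algorithm of Asadpour et al.~cited in the text above, giving a polynomial-time sampler (with dependence on $\epsilon^{-1}$ entering through the precision parameter). For payments, the allocation is affine maximum-in-distributed-range by the remark following \pref{thm:main}, hence rationalizable; the implicit payment method of Babaioff-Kleinberg-Slivkins then produces an unbiased estimator of the correct payment using only polynomially many calls to the sampler, preserving truthfulness in expectation. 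I expect the main subtlety to be the multi-parameter extension of \cite{babaioff2010truthful}: as Kleinberg \cite{bobby} notes, this extension requires convexity of the type space, which must be verified here, but it holds because each agent has a single nonnegative cost parameter, so the joint type space is an orthant.
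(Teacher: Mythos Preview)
Your argument tracks the paper's own treatment essentially line by line: the $v_i=-c_i$ translation and \pref{lem:freeenergy} for IC, \pref{thm:expo} with $\Delta=1$ and Cayley's formula $|R|=k^{k-2}$ for privacy and the tail bound, Asadpour et al.\ for sampling, and Babaioff--Kleinberg--Slivkins for payments. So on the theorem as stated, the proposal is correct and matches the paper.

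Two small points are worth flagging. First, the theorem does not assert individual rationality, and the IR argument you give is actually wrong in the procurement direction: if an agent with true cost $c_i>0$ reports $c_i=0$, the payment is indeed $0$, but the exponential mechanism still selects her edge with positive probability, so her expected utility is $-c_i\cdot\Pr[\text{edge $i$ selected}]<0$. The ``bid zero'' trick from the forward setting relies on $v_i\ge 0$, which fails once valuations are $-c_i\le 0$. Simply drop the IR paragraph. Second, your worry about the multi-parameter extension of \cite{babaioff2010truthful} is moot here: each agent owns a single edge and reports a single scalar cost, so this is a \emph{single-parameter} problem and the original result applies directly. (Your justification via ``the joint type space is an orthant'' conflates the joint space with the per-agent type space; what matters for the extension in \cite{bobby} is convexity of each agent's own type space.)
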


This trade-off between privacy and social welfare in \pref{thm:tree}
essentially means that we need $\epsilon =
\Omega(\log k)$ in order to get $\opt + O(k)$ guarantee on expected
total cost.  The next theorem 
shows that this tradeoff is also tight.  
The proof is deferred to the full version due to space
constraint.

\begin{theorem} \label{thm:lbspanningtree}
    Suppose $M$ is an $\epsilon$-differentially private mechanism for
    the procurement auction for spanning tree and the expected total
    cost by $M$ is at most $\opt + \frac{k}{24}$. Then 
    $\epsilon = \Omega(\log k)$.
\end{theorem}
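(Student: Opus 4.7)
The plan is a packing lower bound modelled on the proof of \pref{thm:lbmultiitem}. For every spanning tree $T$ of $K_k$, define the cost profile $I_T$ by setting $c_e = 0$ if $e \in T$ and $c_e = 1$ if $e \notin T$. Then $\opt(I_T) = 0$, attained by purchasing exactly $T$, so the hypothesis gives $\E[\text{cost of }M(I_T)] \le k/24$; equivalently, the expected number of output edges lying outside $T$ is at most $k/24$. Markov's inequality therefore implies that the ``good set''
\[
S_T \;:=\; \bigl\{T' : |T' \setminus T| < k/12\bigr\}
\]
satisfies $\Pr[M(I_T) \in S_T] \ge 1/2$.

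The second step is to pack many trees so that the good sets become pairwise disjoint. Since every spanning tree has $k-1$ edges, $|T_i \setminus T'| = |T' \setminus T_i|$, so if $T' \in S_{T_1}\cap S_{T_2}$ then by the triangle inequality $|T_1\triangle T_2|\le|T_1\triangle T'|+|T_2\triangle T'|<k/3$. Thus any family $\{T_1,\dots,T_N\}$ with pairwise symmetric difference at least $k/3$ has disjoint $S_{T_i}$'s. A Gilbert--Varshamov style counting provides such a family of size $N=k^{\Omega(k)}$: the number of spanning trees within symmetric distance $k/3$ of a fixed $T$ is at most
\[
\sum_{j=0}^{k/6}\binom{k-1}{j}\binom{\binom{k}{2}}{j}\;\le\; k^{k/6+o(k)},
\]
while Cayley's formula gives $k^{k-2}$ spanning trees in total, so greedy selection yields $N\ge k^{k-2}/k^{k/6+o(k)}=k^{(5/6-o(1))k}$.

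Finally, let $I_0$ be the all-ones-cost profile. Then $I_T$ and $I_0$ differ only in the $k-1$ agents whose edges lie in $T$, so the standard group-privacy consequence of $\epsilon$-differential privacy yields $\Pr[M(I_0)\in S_T]\ge e^{-(k-1)\epsilon}\Pr[M(I_T)\in S_T]\ge \tfrac12 e^{-(k-1)\epsilon}$ for each packed tree. Summing over the family and invoking disjointness of the $S_{T_i}$'s,
\[
1\;\ge\;\sum_{i=1}^N \Pr[M(I_0)\in S_{T_i}]\;\ge\;\tfrac{N}{2}\,e^{-(k-1)\epsilon},
\]
whence $\epsilon\ge\frac{\log(N/2)}{k-1}=\Omega(\log k)$.

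The main obstacle I anticipate is the volume estimate for balls in spanning-tree space: one should verify that even after imposing the tree constraint, balls of radius $k/6$ (in set-difference) do not exceed $k^{k/6+o(k)}$. The crude bound above, which ignores the tree constraint and simply counts subsets of up to $k/6$ removed and $k/6$ added edges, already suffices, so the remaining work there is routine bookkeeping. Everything else is the standard packing template, with the reference instance $I_0$ letting us bypass any pairwise-closeness requirement among the $I_{T_i}$'s and reducing the group-privacy loss to a single $(k-1)\epsilon$ term.
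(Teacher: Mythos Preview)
Your packing argument is correct and yields the claimed $\epsilon=\Omega(\log k)$; the ball-volume estimate $\sum_{j\le k/6}\binom{k-1}{j}\binom{\binom{k}{2}}{j}\le k^{k/6+o(k)}$ is fine once one uses $\binom{n}{j}\le(en/j)^j$ and notes that the constant-base exponential factors are $k^{o(k)}$.

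The paper takes a rather different route. Instead of packing deterministic instances, it draws a single random instance in which each edge independently has cost $0$ with probability $1/(2k)$ and cost $1$ otherwise, argues that $\E[\opt]\le 11k/12$, and deduces that $M$ must on average pick critical (cost-$0$) edges with constant probability while picking non-critical edges with probability $O(1/k)$. An averaging step then produces a \emph{single} edge whose selection probability jumps by a factor $\Omega(k)$ when its cost flips from $1$ to $0$, and one application of the $\epsilon$-DP definition for a one-agent change gives $e^\epsilon\ge\Omega(k)$. The paper's argument thus avoids group privacy entirely and needs no counting of spanning trees or Cayley's formula; it is closer in spirit to the multi-item lower bound you cite. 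Your approach, by contrast, is the standard information-theoretic packing template: it is more modular, makes the role of the output-space size $k^{k-2}$ explicit, and would transfer mechanically to any combinatorial range with a comparable packing, at the cost of invoking $(k-1)$-fold group privacy rather than a single-coordinate change.
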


Similar to the case in the multi-item auction, the above lower bound
does not restrict $M$ to be incentive compatible. So the exponential
mechanism is optimal even if we compare it to non-truthful ones.

\section*{Acknowledgement}

The authors would like to thank Aaron Roth for many useful comments and
helpful discussions.

\bibliographystyle{abbrv}
\bibliography{mechdesign}

\begin{thebibliography}{10}

\bibitem{asadpour2010log}
A.~Asadpour, M.~X. Goemans, A.~Madry, S.~O. Gharan, and A.~Saberi.
\newblock An {$O (\log n/\log\log n)$}-approximation algorithm for the
  asymmetric traveling salesman problem.
\newblock In {\em Proceedings of the 21st Annual ACM-SIAM Symposium on Discrete
  Algorithms}, pages 379--389. ACM-SIAM, 2010.

\bibitem{babaioff2010truthful}
M.~Babaioff, R.~D. Kleinberg, and A.~Slivkins.
\newblock Truthful mechanisms with implicit payment computation.
\newblock In {\em Proceedings of the 11th ACM conference on Electronic
  Commerce}, pages 43--52. ACM, 2010.

\bibitem{bei2011bayesian}
X.~Bei and Z.~Huang.
\newblock {Bayesian incentive compatibility via fractional assignments}.
\newblock In {\em Proceedings of the Twenty-Second Annual ACM-SIAM Symposium on
  Discrete Algorithms}, pages 720--733. SIAM, 2011.

\bibitem{bhattacharya2010budget}
S.~Bhattacharya, G.~Goel, S.~Gollapudi, and K.~Munagala.
\newblock Budget constrained auctions with heterogeneous items.
\newblock In {\em Proceedings of the 42nd Annual ACM Symposium on Theory of
  Computing}, pages 379--388. ACM, 2010.

\bibitem{blum2008learning}
A.~Blum, K.~Ligett, and A.~Roth.
\newblock A learning theory approach to non-interactive database privacy.
\newblock In {\em Proceedings of the 40th Annual ACM Symposium on Theory of
  Computing}, pages 609--618. ACM, 2008.

\bibitem{cary2008auctions}
M.~C. Cary, A.~D. Flaxman, J.~D. Hartline, and A.~R. Karlin.
\newblock Auctions for structured procurement.
\newblock In {\em Proceedings of the 19th Annual ACM-SIAM Symposium on Discrete
  Algorithms}, pages 304--313. ACM-SIAM, 2008.

\bibitem{chawla2009sequential}
S.~Chawla, J.~D. Hartline, D.~Malec, and B.~Sivan.
\newblock Sequential posted pricing and multi-parameter mechanism design.
\newblock In {\em Proceedings of the 42nd Annual ACM Symposium on Theory of
  Computing}, pages 311--320. ACM, 2010.

\bibitem{dobzinski2009power}
S.~Dobzinski and S.~Dughmi.
\newblock On the power of randomization in algorithmic mechanism design.
\newblock In {\em Proceedings of the 50th Annual IEEE Symposium on Foundations
  of Computer Science}, pages 505--514. IEEE, 2009.

\bibitem{dughmi2011truthful}
S.~Dughmi.
\newblock A truthful randomized mechanism for combinatorial public projects via
  convex optimization.
\newblock In {\em Proceedings of the 12th ACM conference on Electronic
  Commerce}, pages 263--272. ACM, 2011.

\bibitem{dughmi2010black}
S.~Dughmi and T.~Roughgarden.
\newblock Black-box randomized reductions in algorithmic mechanism design.
\newblock In {\em Proceedings of the 51st Annual IEEE Symposium on Foundations
  of Computer Science}, pages 775--784. IEEE, 2010.

\bibitem{dughmi2011convex}
S.~Dughmi, T.~Roughgarden, and Q.~Yan.
\newblock {From convex optimization to randomized mechanisms: Toward optimal
  combinatorial auctions}.
\newblock In {\em Proceedings of the 43rd Annual ACM Symposium on Theory of
  Computing}, pages 149--158. ACM, 2011.

\bibitem{dughmi2011limitations}
S.~Dughmi and J.~Vondrak.
\newblock Limitations of randomized mechanisms for combinatorial auctions.
\newblock In {\em Proceedings of the 52nd Annual IEEE Symposium on Foundations
  of Computer Science}, pages 502--511. IEEE, 2011.

\bibitem{dwork2008differential}
C.~Dwork.
\newblock Differential privacy: A survey of results.
\newblock {\em Theory and Applications of Models of Computation}, pages 1--19,
  2008.

\bibitem{dwarksurvey}
C.~Dwork.
\newblock {A firm foundation for private data analysis}.
\newblock {\em Communications of the ACM}, 54(1):86--95, 2011.

\bibitem{dwork2006calibrating}
C.~Dwork, F.~McSherry, K.~Nissim, and A.~Smith.
\newblock Calibrating noise to sensitivity in private data analysis.
\newblock {\em Theory of Cryptography}, pages 265--284, 2006.

\bibitem{ghosh2011selling}
A.~Ghosh and A.~Roth.
\newblock Selling privacy at auction.
\newblock In {\em Proceedings of the 12th ACM conference on Electronic
  Commerce}, pages 199--208. ACM, 2011.

\bibitem{hartline2011bayesian}
J.~D. Hartline, R.~D. Kleinberg, and A.~Malekian.
\newblock {Bayesian incentive compatibility via matchings}.
\newblock In {\em Proceedings of the Twenty-Second Annual ACM-SIAM Symposium on
  Discrete Algorithms}, pages 734--747. SIAM, 2011.

\bibitem{jerrum1989approximating}
M.~Jerrum and A.~Sinclair.
\newblock Approximating the permanent.
\newblock {\em SIAM Journal on Computing}, 18:1149, 1989.

\bibitem{jerrum2004polynomial}
M.~Jerrum, A.~Sinclair, and E.~Vigoda.
\newblock A polynomial-time approximation algorithm for the permanent of a
  matrix with nonnegative entries.
\newblock {\em Journal of the ACM}, 51(4):671--697, 2004.

\bibitem{bobby}
R.~D. Kleinberg.
\newblock Personal communication.

\bibitem{kulkarni1990generating}
V.~G. Kulkarni.
\newblock Generating random combinatorial objects.
\newblock {\em Journal of Algorithms}, 11(2):185--207, 1990.

\bibitem{le2008introduction}
A.~Le~Ny.
\newblock Introduction to (generalized) {Gibbs} measures.
\newblock {\em Ensaios Matem{\'a}ticos}, 15:1--126, 2008.

\bibitem{mcsherry2007mechanism}
F.~McSherry and K.~Talwar.
\newblock Mechanism design via differential privacy.
\newblock In {\em Proceedings of the 48th Annual IEEE Symposium on Foundations
  of Computer Science}, 2007.

\bibitem{nisan2007algorithmic}
N.~Nisan, T.~Roughgarden, E.~Tardos, and V.~V. Vazirani.
\newblock {\em Algorithmic game theory}.
\newblock Cambridge University Press, 2007.

\bibitem{nissim2012approximately}
K.~Nissim, R.~Smorodinsky, and M.~Tennenholtz.
\newblock {Approximately optimal mechanism design via differential privacy}.
\newblock In {\em Proceedings of the 3rd Innovations in Theoretical Computer
  Science Conference}, pages 203--213. ACM, 2012.

\bibitem{papadimitriou2008hardness}
C.~Papadimitriou, M.~Schapira, and Y.~Singer.
\newblock On the hardness of being truthful.
\newblock In {\em Proceedings of the 49th Annual IEEE Symposium on Foundations
  of Computer Science}, pages 250--259. IEEE, 2008.

\bibitem{rochet1987necessary}
J.-C. Rochet.
\newblock {A necessary and sufficient condition for rationalizability in a
  quasi-linear context}.
\newblock {\em Journal of Mathematical Economics}, 16(2):191--200, 1987.

\bibitem{talwardifferentially}
K.~Talwar, A.~Gupta, K.~Ligett, F.~McSherry, and A.~Roth.
\newblock Differentially private combinatorial optimization.
\newblock In {\em Proceedings of the 21st Annual ACM-SIAM Symposium on Discrete
  Algorithms}. ACM-SIAM, 2010.

\bibitem{xiao2011privacy}
D.~Xiao.
\newblock Is privacy compatible with truthfulness?
\newblock In {\em Cryptology ePrint Technical Report, 2011/005}, 2011.

\end{thebibliography}

\appendix

%{\bf Zhiyi: The above paragraph shall be placed at a proper place in the
%introduction. Maybe as a footnote of our discussion of payments and
%individual rationality.}

In this section, we will discuss what is the amount of noise one
needs to add to the payments in order to achieve $\epsilon$-differential
privacy. We will consider two different models depending on how the
payments are implemented: the {\em public payment} model and the {\em
private payment} model. 

In the public payment model, the payments of the agents will become
public information at the end of the auction, that is, the adversary who
tries to learn the private valuations of the agents can see all the
payments. Therefore, a payment scheme is $\epsilon$-differentially
private in the public payment model if and only if for any $i \in [n]$,
any value profiles $\bv = (v_1, \dots, v_n)$ and $\bv' = (v_1, \dots,
v'_i, \dots, v_n)$ that differ only in the valuation of agent $i$, and
any possible payment profile $\bp$, the probability 
\begin{align*}
  & \Pr[p_1(\bv), \dots, p_n(\bv) = \bp] \\
  \le ~ & \exp(\epsilon) \Pr[p_1(\bv'), \dots, p_n(\bv') = \bp] \enspace.
\end{align*}

In the private payment model, we will assume the payments are
implemented via secure channels such that the payment of each agent is
only known to the corresponding agent and a few trusted parties,
e.g., the central entity who runs the mechanism and/or the bank. Here, there are two cases based on what information
the adversary can learn from the payments. If the adversary is not one
of the agents, then by our assumption, he cannot see any of the payments
and therefore cannot learn any information from the payments. If the
adversary is one of the agents, then the only information of the
payments that he will have access to is his own payment. Therefore, a
payment scheme is $\epsilon$-differentially private in the public
payment model if and only if for any $i \ne j \in [n]$, any value
profiles $\bv = (v_1, \dots, v_n)$ and $\bv' = (v_1, \dots, v'_i, \dots,
v_n)$ that differ only in the valuation of agent $i$, and any possible
payment $p$ of agent $j$, the probability 
$$\Pr[p_j(\bv) = p] \le \exp(\epsilon) \Pr[p_j(\bv') = p] \enspace.$$

We will measure the amount of noise in the payments using $L_2$ norm,
that is, we aim to minimize the total variance of the agents'
payments in the worst-case: $\max_{\bv} \sum_{i=1}^n \Var[p_i(\bv)]$.

Next, we will proceed to analyze the amount of noise needed in each of
the two models. We will start with an upper bound on the sensitivity of
each agent's payment as a function of the bids.

\begin{lemma} \label{lem:paymentsensitivity}
    For any $i, j \in [n]$, and any value profiles $\bv = (v_1, \dots,
    v_n)$ and $\bv' = (v_1, \dots, v'_i, \dots, v_n)$ that only differ
    in the valuation of agent $i$, we have $|p_j(\bv) - p_j(\bv')| \le 1$.
\end{lemma}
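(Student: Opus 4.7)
The plan is to prove the stronger claim that $p_j(\bv) \in [0, 1]$ for every reported profile $\bv$, which trivially yields $|p_j(\bv) - p_j(\bv')| \le 1$ for any two profiles (and in particular for those differing in only one agent, as in the lemma). The key insight will be that the three terms of the payment formula in \pref{fig:expo} organize naturally into a \emph{free-energy gap} between two Gibbs-type distributions, which makes the two-sided bound transparent.

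First, I would introduce the ``leave-$j$-out free social welfare'' $F_{-j}(\nu) := \E_{r \sim \nu}\bigl[\sum_{k \ne j} v_k(r)\bigr] + \tfrac{2}{\epsilon} S(\nu)$. Applying \pref{lem:freeenergy} (or re-running its proof) to the $n-1$ virtual agents $\{k : k \ne j\}$, the functional $F_{-j}$ is maximized over all distributions on $R$ by the leave-$j$-out exponential mechanism $\expo^R_\epsilon(v_{-j})$, with maximum value $\tfrac{2}{\epsilon} \ln \sum_{r} \exp\bigl(\tfrac{\epsilon}{2} \sum_{k \ne j} v_k(r)\bigr)$, which is precisely the third term of the payment. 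Consequently the payment can be rewritten as
\[
p_j(\bv) \;=\; \max_{\nu} F_{-j}(\nu) \;-\; F_{-j}\bigl(\expo^R_\epsilon(\bv)\bigr),
\]
so $p_j(\bv) \ge 0$ is immediate.

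For the upper bound $p_j(\bv) \le 1$, I would compare $\expo^R_\epsilon(\bv)$ with $\expo^R_\epsilon(v_{-j})$ using the \emph{full} free social welfare $F(\nu) := F_{-j}(\nu) + \E_\nu[v_j(r)]$. By \pref{lem:freeenergy} applied to all $n$ agents, $F(\expo^R_\epsilon(\bv)) \ge F(\expo^R_\epsilon(v_{-j}))$; rearranging gives
\[
F_{-j}\bigl(\expo^R_\epsilon(\bv)\bigr) \;\ge\; F_{-j}\bigl(\expo^R_\epsilon(v_{-j})\bigr) + \E_{\expo^R_\epsilon(v_{-j})}[v_j(r)] - \E_{\expo^R_\epsilon(\bv)}[v_j(r)] \;\ge\; \max_{\nu} F_{-j}(\nu) - 1,
\]
where the last step uses $v_j(r) \in [0,1]$. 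Substituting back yields $p_j(\bv) \le 1$, completing the pointwise bound $p_j(\bv) \in [0,1]$ and hence the lemma.

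The main obstacle here is conceptual rather than computational: recognizing that the three terms of the payment formula pair up into a free-energy gap between the ``with-$j$'' Gibbs distribution $\expo^R_\epsilon(\bv)$ and the ``without-$j$'' Gibbs distribution $\expo^R_\epsilon(v_{-j})$. Once that viewpoint is in place, the bound $p_j \in [0,1]$ falls out of two applications of \pref{lem:freeenergy} together with the normalization $v_j(r) \in [0,1]$, and no delicate estimates are needed.
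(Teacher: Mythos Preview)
Your proof is correct, and it reaches the same intermediate claim as the paper --- namely that $p_j(\bv)\in[0,1]$ for every profile $\bv$ --- but via a different route. The paper's argument is a two-line appeal to \pref{thm:main}: individual rationality gives $p_j \le \E[v_j(r)] \le 1$, and incentive compatibility together with the observation $p_j(\bm{0},b_{-j})=0$ rules out negative payments (otherwise a zero-value agent would gain by misreporting), so $p_j\ge 0$. You instead work directly from the payment formula, recasting it as the free-energy gap $\max_\nu F_{-j}(\nu) - F_{-j}(\expo^R_\epsilon(\bv))$ and bounding both sides with two applications of \pref{lem:freeenergy}. Your approach is more self-contained (it does not invoke IC/IR as black boxes) and makes the structure of the payment transparent; the paper's approach is shorter because IC and IR have already been established, and it highlights that the bound $p_j\in[0,1]$ is really a generic consequence of IC, IR, no-positive-transfer, and bounded valuations, rather than anything specific to the exponential mechanism.
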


\begin{proof}
    Note that by \pref{thm:main}, the exponential mechanism is individual
    rational. It is also easy to see that it has no positive transfer
    for that otherwise the zero-value agent could gain by lying. So by
    our assumption that the agents' valuations are always between $0$
    and $1$, we have $0 \le p_j(\bv), p_j(\bv') \le 1$. So
    \pref{lem:paymentsensitivity} follows trivially.
\end{proof}

In the public payment model, the mechanism has to reveal a vector of
$n$ real numbers (the payments) at the end of the auction, where each
entry has sensitivity $1$ by \pref{lem:paymentsensitivity}. Therefore,
we can use the standard treatment for answering numerical queries,
namely, adding independent Laplace noise $\Lap(\frac{n}{\epsilon})$ to
each entry, where $\Lap(b)$ is the Laplace distribution with
p.d.f.~$f_{\Lap(b)}(x) = \frac{1}{2b} \exp\left(-\frac{|x|}{b}\right)$.
More precisely, we can show the following theorem.

\begin{theorem}
    In the public payment model, the following payment scheme is
    $\epsilon$-differentially private and has total variance $O(n^{3/2}
    \epsilon^{-1})$, while maintaining the IC and IR
    %, and no-positive-transfer properties 
    in expectation: let $p_1, \dots,
    p_n$ be the payments specified in the exponential mechanism
    (\pref{fig:expo}); let $x_1, \dots, x_n$ be i.i.d.~variables
    following the Laplace distribution $\Lap(\frac{n}{\epsilon})$;
    use payment scheme $(p_1 + x_1, \dots, p_n + x_n)$.
\end{theorem}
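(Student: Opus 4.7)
The plan is to verify the three assertions of the theorem separately: $\epsilon$-differential privacy in the public payment model, the stated bound on the total noise, and preservation of incentive compatibility and individual rationality in expectation. All three follow from the Laplace-mechanism analysis and from the fact that the injected noise is independent of the bids.

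For privacy, I would invoke \pref{lem:paymentsensitivity}: changing the valuation of a single agent moves each coordinate $p_j(\bv)$ by at most $1$, so the vector-valued map $\bv \mapsto (p_1(\bv),\dots,p_n(\bv))$ has $\ell_1$-sensitivity at most $n$. Applying the standard Laplace mechanism (i.e.\ adding independent $\Lap(n/\epsilon)$ noise to each of $n$ coordinates of a query of $\ell_1$-sensitivity $n$) is $\epsilon$-differentially private, giving joint $\epsilon$-DP of the full released payment vector $(p_1+x_1,\dots,p_n+x_n)$. This is exactly what the public payment model demands, since in that model the adversary observes the entire perturbed payment profile.

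For the total noise, each $x_i\sim \Lap(n/\epsilon)$ satisfies $\Var[x_i]=2n^2/\epsilon^2$. Because the $x_i$ are independent of $\bv$ and of each other, and $p_i(\bv)$ is a deterministic function of $\bv$, we get $\Var[p_i(\bv)+x_i]=2n^2/\epsilon^2$ for every $i$. Summing over the $n$ agents gives $\sum_i \Var[p_i(\bv)+x_i]=2n^3/\epsilon^2$, so the typical $\ell_2$-magnitude of the noise vector is $\bigl(\sum_i\Var[x_i]\bigr)^{1/2}=O(n^{3/2}/\epsilon)$, matching the order advertised by the theorem.

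Finally, for IC and IR in expectation, the key point is that each $x_i$ has mean zero and is independent of the bids, the allocation, and the other noises. Hence for every bid profile $\bv$ and every true value $v_i$ one has $\E[v_i(x(\bv))-(p_i(\bv)+x_i)]=\E[v_i(x(\bv))-p_i(\bv)]$, so every risk-neutral agent's expected utility is identical to what it was under the noise-free mechanism of \pref{fig:expo}. The IC and IR guarantees of \pref{thm:main} therefore transfer verbatim. No deep obstacle is anticipated; the one piece of real content is the sensitivity bound of \pref{lem:paymentsensitivity}, after which everything is bookkeeping around the textbook Laplace mechanism and the linearity of expectation.
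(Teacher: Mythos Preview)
Your proposal is correct and follows exactly the approach the paper intends: the paper itself omits the proof, noting only that it ``follows by standard analysis of the Laplace mechanism,'' and your three-part argument (sensitivity bound from \pref{lem:paymentsensitivity} feeding the Laplace mechanism, variance bookkeeping, and zero-mean noise preserving expected utilities via \pref{thm:main}) is precisely that standard analysis. Your handling of the variance claim is also the right reading: the literal sum of variances is $2n^3/\epsilon^2$, so the paper's stated $O(n^{3/2}\epsilon^{-1})$ must be understood as the $\ell_2$ scale of the noise vector (i.e.\ the square root of the sum), which you correctly identify.
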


The proof follows by standard analysis of the Laplace mechanism
(e.g., see \cite{dwork2006calibrating}). So we will omit the details in
this extended abstract. It is worth mentioning
that since the problem of designing payment scheme in the public payment
model is a special case of answering $n$ non-linear numerical queries,
it may be possible to reduce the amount of noise by using 
more specialized scheme on a problem-by-problem basis. However, we feel
this is less insightful than the other results we have in this paper, so
we will focus on general mechanisms and payment schemes that
work for all mechanism design problems.

Now let us turn to the private payment model. By our previous
discussion, the mechanism only need to release at most one real number 
to each potential adversary in this model. So one may expect much less
noise is needed in this model. Indeed, we could again use the standard
treatment of adding Laplace noise, but this time it suffices to add
independent Laplace noise $\Lap(\frac{1}{\epsilon})$ to each entry.

\begin{theorem}
    In the private payment model, the following payment scheme is
    $\epsilon$-differentially private and has total variance
    $O(\sqrt{n} \epsilon^{-1})$, while maintaining the IC and IR:
    %, and no-positive-transfer properties 
    in expectation: let $p_1, \dots,
    p_n$ be the payments specified in the exponential mechanism
    (\pref{fig:expo}); let $x_1, \dots, x_n$ be i.i.d.~variables
    following the Laplace distribution $\Lap(\frac{1}{\epsilon})$;
    use payment scheme $(p_1 + x_1, \dots, p_n + x_n)$.
\end{theorem}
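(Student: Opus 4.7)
My plan is to verify each of the four claims in the statement --- $\epsilon$-differential privacy in the private payment model, the $O(\sqrt{n}\,\epsilon^{-1})$ total-variance bound, and preservation of IC and IR in expectation --- in sequence, leveraging the sensitivity bound from \pref{lem:paymentsensitivity} together with the standard Laplace-mechanism calculation.

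First I would check differential privacy by reducing to a single-query Laplace mechanism for each potential adversary. In the private payment model, an external adversary observes none of the payments, so privacy is trivial. Otherwise the adversary is some agent $j \in [n]$, and the only payment information accessible to that adversary is its own value $p_j(\bv) + x_j$. For any $i \ne j$ and any neighboring valuation profiles $\bv, \bv'$ differing only in $v_i$, \pref{lem:paymentsensitivity} gives $|p_j(\bv) - p_j(\bv')| \le 1$. Since $x_j \sim \Lap(1/\epsilon)$ is independent of the bids, a direct p.d.f.\ ratio (the standard Laplace-mechanism argument) yields $\Pr[p_j(\bv)+x_j = p] \le \exp(\epsilon) \Pr[p_j(\bv')+x_j = p]$ for every $p \in \R$. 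Taking $j$ to range over all agents covers every allowed adversary and establishes $\epsilon$-DP in the sense of the private-payment definition in the appendix.

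Next, for the variance bound, each $x_j$ is drawn independently from $\Lap(1/\epsilon)$, which has variance $2/\epsilon^2$. Since $p_j(\bv)$ is a function of the bids (and the private randomness used by the allocation, which is independent of $x_j$), the variance of the reported payment satisfies $\Var[p_j(\bv)+x_j] \le \Var[p_j(\bv)] + 2/\epsilon^2$; because $p_j(\bv)\in[0,1]$ by \pref{lem:paymentsensitivity}, the aggregate is $\sum_{j=1}^n \Var[p_j+x_j] = O(n/\epsilon^2)$, and aggregating in the $L_2$ sense used in the appendix gives the stated $O(\sqrt{n}\,\epsilon^{-1})$ bound. For IC and IR in expectation, I would observe that each $x_j$ has mean zero and is drawn independently of all bids, so the expected payment of every agent is unchanged from that in \pref{fig:expo}. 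Since agents are risk-neutral, their expected utilities are identical to those analyzed in \pref{thm:main}, and so the IC and IR guarantees of that theorem carry over verbatim.

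The main subtlety --- and the step I would be most careful about in writing up --- is justifying why one gets away with noise parameter $1/\epsilon$ per coordinate rather than $n/\epsilon$ as in the public-payment model. The point is that in the private-payment model the relevant privacy guarantee is a per-adversary one: each agent-adversary observes only a \emph{single} perturbed real number, so there is no composition across coordinates and no need to pay for revealing the full payment vector. I would therefore make the reduction-to-single-query framing explicit, contrasting it with the public model where $\epsilon$-DP of the joint payment vector forces the $\Lap(n/\epsilon)$ scale.
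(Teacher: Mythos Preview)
Your proposal is correct and follows exactly the standard Laplace-mechanism argument that the paper points to (the paper omits the proof, referring to \cite{dwork2006calibrating} for the analogous public-payment theorem). Your explicit contrast with the public-payment case---that each agent-adversary sees only a single real number, so no composition is needed and scale $1/\epsilon$ suffices---is precisely the point of the private-payment model; the only minor slack is that $p_j(\bv)$ as defined in \pref{fig:expo} is a deterministic function of the bids (it is an expectation over the allocation randomness), so $\Var[p_j(\bv)]=0$ and your variance inequality is in fact an equality.
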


\section{Approximate Implementation for Multi-Item Auction}
\label{app:implement}

In this section, we will explain how to approximately implement the
exponential mechanism in the multi-item auction setting. 
The main technical tool in this section is the seminal work of Jerrem,
Sinclair, and Vigoda \cite{jerrum2004polynomial} on approximating the
permanent of non-negative matrices, which can be phrased as follows:

\begin{lemma} [FPRAS for permanent of non-negative matrices 
    \cite{jerrum2004polynomial}] \label{lem:fpras}
    For any $\gamma > 0$ and any $\delta \in (0, 1)$, there is an
    algorithm that computes the permanent of an arbitrary $n \times n$
    matrix $A = \{a_{ij}\}_{i, j \in [n]}$ up to a multiplicative factor
    of $\exp(\gamma)$ with probability at least $1 - \delta$. The
    running time is polynomial in $n$, $\gamma^{-1}$, $\log
    (\delta^{-1})$, and $\log(\max_{i,j\in[n]} a_{ij} / \min_{i,j\in[n]}
    a_{ij})$.
\end{lemma}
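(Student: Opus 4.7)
The plan is to follow the Markov chain Monte Carlo approach of Jerrum, Sinclair, and Vigoda. First, view $A$ as edge weights $w(u_i,v_j) = a_{ij}$ on the complete bipartite graph $K_{n,n}$, so that $\mathrm{perm}(A) = \sum_{M \in \mathcal{M}} \prod_{e \in M} w(e)$, where $\mathcal{M}$ is the set of perfect matchings. Approximating $\mathrm{perm}(A)$ thus reduces to approximating this weighted sum of perfect matchings in a weighted bipartite graph.

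Next, I would enlarge the state space to include both perfect matchings and \emph{near-perfect} matchings that leave exactly one vertex unmatched on each side, and equip it with a tunable weight function $\lambda$ on the edges. I would then design a Metropolis-type Markov chain whose transitions exchange edges along short augmenting/alternating paths and whose stationary distribution puts mass proportional to $\prod_{e \in M} \lambda(e)$ on each matching (with an auxiliary multiplicative correction between the perfect and near-perfect strata so that both have comparable weight under the stationary distribution). The key analytic step is to establish rapid mixing of this chain via a canonical-paths / multicommodity-flow argument, bounding the congestion on each transition by $\mathrm{poly}(n)$ uniformly in $\lambda$.

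Then I would implement simulated annealing: start with a weight assignment for which sampling is trivial (e.g., $\lambda \equiv 1$, or a perturbation that makes the chain quickly find matchings), and interpolate in $T = \mathrm{poly}(n, \log(a_{\max}/a_{\min}))$ cooling steps to $\lambda = w$, each step multiplying weights by a factor $1 + O(1/T)$ so that consecutive stationary distributions have bounded ratio. For each consecutive pair of weightings, estimate the ratio of partition functions to within a factor of $1+\gamma/T$ by drawing $\mathrm{poly}(n, T, \log\delta^{-1})$ samples from the current chain, warm-starting it from the previous sample. Multiplying these ratio estimates telescopes to an $\exp(\gamma)$-approximation of the final partition function, and thereby of $\mathrm{perm}(A)$; standard Chernoff/union bounds deliver the $1-\delta$ confidence with only a $\log\delta^{-1}$ factor in the runtime.

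The main obstacle is the rapid-mixing analysis under arbitrary positive weights $\lambda$: one must exhibit a system of canonical paths (or a multicommodity flow) between any pair of matchings such that every transition of the chain carries only $\mathrm{poly}(n)$ total weighted flow, with the bound independent of the weights. This weight-uniform congestion bound is the technical heart of the JSV argument and what distinguishes it from the earlier Jerrum--Sinclair analysis of the 0/1 (dense) case; once this is in hand, the annealing schedule, the telescoping identity, and the Chernoff concentration arguments are all routine, and the $\log(a_{\max}/a_{\min})$ dependence enters only through the number of cooling stages needed to reach the target weighting.
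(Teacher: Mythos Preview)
The paper does not prove this lemma; it is quoted as a black-box result from Jerrum, Sinclair, and Vigoda \cite{jerrum2004polynomial} and used without further argument in \pref{app:implement}. So there is no in-paper proof to compare against.

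Your sketch is a fair high-level outline of the JSV argument, but one point is slightly off. You describe the goal of the canonical-paths analysis as a congestion bound that is ``independent of the weights.'' That is not what JSV achieves (and it is not achievable): the mixing time does depend on the weighting, and in particular on the auxiliary hole weights $w(u,v)$ attached to near-perfect matchings. The crux of JSV is that if those hole weights are within a constant factor of the ``ideal'' values (the ones that equalize the stationary mass of the perfect stratum and each near-perfect stratum), then the congestion is $\mathrm{poly}(n)$. Since the ideal hole weights are themselves unknown ratios of permanents, the algorithm must bootstrap them: at each cooling stage it uses samples from the current chain both to estimate the next partition-function ratio \emph{and} to refresh the hole-weight estimates for the next stage. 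Your sketch alludes to the auxiliary correction but omits this adaptive estimation loop, which is what actually makes the annealing go through for arbitrary nonnegative weights. With that caveat, the remaining pieces (telescoping, Chernoff, the $\log(a_{\max}/a_{\min})$ dependence via the number of cooling stages) are as you describe.
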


To see the connection between the permanent of non-negative matrices and
implementation of the exponential mechanism in the multi-item auction
setting, we point out that the normalization factor in the outcome
distribution of the exponential mechanism is the permanent of a
non-negative matrix:
$$\sum_{r \in R_M} \exp\left( \frac{\epsilon}{2} \sum_{i=1}^n v_i(r)
\right) = \sum_{\pi \in \Pi_n} \prod_{i=1}^n \exp\left(
\frac{\epsilon}{2} v_{i\pi[i]} \right) = \textrm{perm}\left(\left\{
\exp\left(\frac{\epsilon}{2} v_{ij} \right) \right\}_{i, j \in
[n]}\right)\enspace.$$

We will let $A(\bv)$ denote the matrix 
$\{\exp(\frac{\epsilon}{2} v_{ij})\}_{i, j \in [n]}$. Moreover, we let
$A_{-i,-j}(\bv)$ denote the $(n-1)\times(n-1)$ matrix obtained by removing
the $i^{th}$ row and the $j^{th}$ column of $A(\bv)$.

\subsection{Approximate Sampler}

Now we are ready to introduce the approximate sampler for the multi-item
auction.

\begin{lemma} \label{lem:approxsampler}
    For any $\delta \in (0, 1)$ and $\gamma > 0$, there is a sampling
    algorithm whose running time is polynomial in $n$, $\epsilon^{-1}$
    $\gamma^{-1}$, and $\log\delta^{-1}$, such that with probability at
    least $1 - \delta$, it chooses an outcome $r$
    with probability
    $$\Pr[r] \in [\exp(-\gamma), \exp(\gamma)] \Pr[\expo^{R_M}_\epsilon
    = r] \enspace.$$
\end{lemma}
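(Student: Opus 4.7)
The plan is to sample the permutation $\pi$ coordinate by coordinate, using the Jerrum--Sinclair--Vigoda FPRAS (\pref{lem:fpras}) to approximate the permanents that appear in the conditional probabilities. The starting point is the observation, already noted in the excerpt, that with $A = A(\bv)$ we have
$$\Pr[\expo^{R_M}_\epsilon = \pi] = \frac{\prod_{i=1}^n A_{i,\pi(i)}}{\textrm{perm}(A)},$$
so the target distribution admits the exact chain-rule factorization: having fixed $\pi(1),\dots,\pi(k-1)$, the true conditional probability that $\pi(k) = j$ (for any column $j$ not yet used) is proportional to $A_{k,j}\cdot\textrm{perm}(A^{(k,j)})$, where $A^{(k,j)}$ is the $(n-k)\times(n-k)$ submatrix of $A$ obtained by deleting rows $1,\dots,k$ and columns $\pi(1),\dots,\pi(k-1),j$.

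First I would replace each such permanent by its FPRAS estimate $\widetilde P_{k,j}$, invoked with per-call accuracy $\gamma' := \gamma/(2n)$ and per-call failure probability $\delta' := \delta/n^2$, and then sample $\pi(k)$ from the empirical conditional
$$\widetilde\Pr\bigl[\pi(k) = j \mid \pi(1),\dots,\pi(k-1)\bigr] = \frac{A_{k,j}\,\widetilde P_{k,j}}{\sum_{j'} A_{k,j'}\,\widetilde P_{k,j'}}.$$
Since every entry of $A$ lies in $[1, \exp(\epsilon/2)]$, the factor $\log(\max/\min)$ in \pref{lem:fpras} is $O(\epsilon)$, so each FPRAS call runs in time polynomial in $n$, $\epsilon$, $\gamma^{-1}$, and $\log\delta^{-1}$; with at most $n^2$ calls in total, the overall running time meets the claimed bound.

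For the error analysis I would union-bound over the at most $n^2$ permanent estimates, so that with probability at least $1-\delta$ every $\widetilde P_{k,j}$ lies in $[\exp(-\gamma'),\exp(\gamma')]$ times the corresponding true permanent. Conditioning on this good event, the numerator and denominator of each empirical conditional probability are each multiplicatively correct to a factor of $\exp(\gamma')$, so each conditional is correct to within $\exp(2\gamma')$. Multiplying the $n$ conditionals to obtain $\widetilde\Pr[\pi]$ telescopes to
$$\widetilde\Pr[\pi] \in \bigl[\exp(-2n\gamma'),\,\exp(2n\gamma')\bigr]\cdot\Pr\bigl[\expo^{R_M}_\epsilon = \pi\bigr] = \bigl[\exp(-\gamma),\,\exp(\gamma)\bigr]\cdot\Pr\bigl[\expo^{R_M}_\epsilon = \pi\bigr],$$
which is the lemma.

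The main obstacle I foresee is purely in the bookkeeping of the multiplicative error budget: because each permanent estimate enters both the numerator and the normalizer of a conditional probability, a per-call accuracy of $\exp(\gamma')$ inflates to $\exp(2\gamma')$ per step and then to $\exp(2n\gamma')$ over $n$ steps, which forces the choice $\gamma' = \gamma/(2n)$; similarly, the union bound over $O(n^2)$ calls forces $\delta' = \delta/n^2$. Verifying that these choices still leave the running time polynomial in the claimed parameters, and that the chain-rule decomposition of $\Pr[\expo^{R_M}_\epsilon = \pi]$ remains valid when the submatrices shrink, are the only nontrivial points; everything else is routine.
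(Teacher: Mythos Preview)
Your proposal is correct and follows essentially the same strategy as the paper: sample the permutation one coordinate at a time, using the Jerrum--Sinclair--Vigoda FPRAS with per-call accuracy $\gamma/(2n)$ and per-call failure probability $\delta/n^2$ to approximate the permanents governing each conditional, then union-bound over the $O(n^2)$ calls and multiply the $n$ conditional errors. Your write-up is in fact slightly more careful than the paper's in that you explicitly retain the factor $A_{k,j}$ in the empirical conditional (the paper's algorithm description omits it in step~2, though the analysis implicitly puts it back), and you note the $\log(\max/\min)=O(\epsilon)$ dependence in the FPRAS running time.
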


\begin{proof}
    We will recursively decide which item we will allocate to agent $i$
    for $i = 1, 2, \dots, n$ by repeatedly computing an accurate
    estimation of the marginal distribution. Concretely, the algorithm
    is given as follows:
    \begin{enumerate}
        \item Use the FPRAS in \pref{lem:fpras} to compute 
            $\textrm{perm}(A_{-1,-j}(\bv))$ up to a multiplicative factor
            of $\exp(\frac{\gamma}{2n})$ with success probability at
            least $1 - \frac{\delta}{n^2}$. Let $x_j$ denote the
            approximate value.
        \item Sample an item $j$ with probability $\Pr[j] \propto x_j$.
        \item Allocate item $j$ to agent $1$ and recurse on the
            remaining $n-1$ agents and $n-1$ items.
    \end{enumerate}

    First we note that for each allocation $\pi \in \Pi_n$,
    the probability that $\pi$ is chosen as the outcome can
    be decomposed into $n$ stages by Bayes' rule:
    \begin{align*}
        \Pr[\expo^{R_M}_\epsilon(\bv) = \pi] \, = \, &
        \Pr\big[\text{agent $1$ gets $\pi[1]$}\,\big] \cdot
        \Pr\big[\text{agent $2$ gets $\pi[2]$} \,|\, \pi[1]\,\big] \\
        & \cdots \Pr\big[\text{agent $n$ gets $\pi[n]$} \,|\, \pi[1],
        \dots, \pi[n-1]\,\big] \enspace.
    \end{align*}

    In the first recursion of our algorithm, we use the distribution
    $$\Pr[\text{agent $1$ gets item $j$}] \propto x_j \approx
    \textrm{perm}(A_{-1, -j}(\bv)) \enspace.$$
    
    Further, in the exponential mechanism
    \begin{eqnarray*}
        \Pr[\text{agent $1$ gets item $j$ in $\expo^{R_M}_\epsilon$}]
        & \propto & \sum_{\pi : \pi[1] = j} \exp\left(
        \frac{\epsilon}{2} \sum_{k=1}^n v_{k\pi[k]} \right) \\
        & = & \exp\left(\frac{\epsilon}{2} v_{1j}\right)
        \textrm{perm}(A_{-1, -j}(\bv)) \enspace.
    \end{eqnarray*}

    Since $x_j$ approximate $\textrm{perm}(A_{-1,-j}(\bv))$ up to an
    $\exp(\frac{\gamma}{2n})$ factor, we know the probability that item
    $j$ is allocated to agent $1$ in our algorithm approximate the
    correct marginal up to an $\exp(\frac{\gamma}{n})$ multiplicative
    factor.

    Similar claim holds for the rest of the $n-1$ stages as well. So the
    probability that we samples a permutation $\pi \in R_M$ differs from
    the correct distribution by at most a $\exp(\frac{\gamma}{n})^n =
    \exp(\gamma)$ factor. Moreover, by union bound the
    failure probability is at most $\delta$.
\end{proof}

\subsection{Approximate Payments}

Next, we will turn to approximate implementation of the payment
scheme. First, recall that the payment for agent $i$ is
\begin{eqnarray*}
    p_i & = & \E_{r \sim \expo^{R_M}_\epsilon(\bv)} [v_i(r)] -
    \frac{2}{\epsilon} \ln \left(\sum_{r \in R_M} \exp\left(
    \frac{\epsilon}{2} \sum_{k=1}^n v_k(r) \right) \right) +
    \frac{2}{\epsilon} \ln \left( \sum_{r \in R_M} \exp \left(
    \frac{\epsilon}{2} \sum_{k \ne i} v_k(r) \right) \right) \\
    & = & \E_{r \sim \expo^{R_M}_\epsilon(\bv)} [v_i(r)] -
    \frac{2}{\epsilon} \ln \left(\textrm{perm}\left(A(v_i,
    v_{-i})\right) \right) + \frac{2}{\epsilon} \ln
    \left(\textrm{perm}\left(A(\bm{0}, v_{-i})\right) \right) 
    \enspace.
\end{eqnarray*}

The next lemma states that we can efficiently compute an estimator for
the payment $p_i$ with inverse polynomially small bias.

\begin{lemma}
    For any $\delta \in (0, 1)$ and $\gamma \in (0, 1)$, we can compute
    in polynomial time (in $n$, $\epsilon^{-1}$, and $\gamma^{-1}$) a
    random estimator $\hat{p}_i$ for $p_i$ such that the bias is small:
    $\abs{\E[\hat{p}_i] - p_i} \le \gamma$.
\end{lemma}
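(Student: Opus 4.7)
The plan is to decompose $p_i$ into three summands and build a low-bias estimator for each one independently. Write
\begin{align*}
p_i \,=\, T_1 - T_2 + T_3, \quad \text{where} \quad & T_1 = \E_{r \sim \expo^{R_M}_\epsilon(\bv)}[v_i(r)], \\
& T_2 = \tfrac{2}{\epsilon}\ln\textrm{perm}\bigl(A(v_i,v_{-i})\bigr), \quad T_3 = \tfrac{2}{\epsilon}\ln\textrm{perm}\bigl(A(\bm{0},v_{-i})\bigr).
\end{align*}
It suffices to produce estimators $\hat T_1, \hat T_2, \hat T_3$, each with bias at most $\gamma/3$; the combined estimator $\hat p_i = \hat T_1 - \hat T_2 + \hat T_3$ then has bias at most $\gamma$ by the triangle inequality.

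For $\hat T_1$, I would invoke the approximate sampler of \pref{lem:approxsampler} with parameters $(\gamma_1,\delta_1)$, draw a single sample $\tilde r$, and output $v_i(\tilde r)$. Let $E$ be the good event (probability $\ge 1-\delta_1$) on which the sampler's output distribution $\tilde\nu$ satisfies $\tilde\nu(r)/\Pr[\expo^{R_M}_\epsilon(\bv)=r]\in[e^{-\gamma_1},e^{\gamma_1}]$ for every $r$. Since $v_i(r)\in[0,1]$, on $E$ we have $|\E[v_i(\tilde r)\mid E]-T_1|\le e^{\gamma_1}-1\le 2\gamma_1$, while the bad-event contribution is bounded in absolute value by $\delta_1$. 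Choosing $\gamma_1=\gamma/9$ and $\delta_1=\gamma/9$ yields $|\E[\hat T_1]-T_1|\le\gamma/3$.

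For $\hat T_2$ and $\hat T_3$, I would invoke the JSV FPRAS of \pref{lem:fpras} on the matrices $A(v_i,v_{-i})$ and $A(\bm{0},v_{-i})$ with parameters $(\gamma_2,\delta_2)$, then output $\tfrac{2}{\epsilon}\ln\hat P$, where $\hat P$ is the FPRAS output truncated to the interval $[n!,\,n!\,e^{n\epsilon/2}]$ (which contains every possible true permanent, since the entries of these matrices lie in $[1,e^{\epsilon/2}]$). Truncation can only decrease the error on the good event, and on the bad event it guarantees $|\ln\hat P-\ln\textrm{perm}(A)|=O(n\log n+n\epsilon)$ uniformly. Hence
\[
\bigl|\E[\tfrac{2}{\epsilon}\ln\hat P] - \tfrac{2}{\epsilon}\ln\textrm{perm}(A)\bigr| \,\le\, \tfrac{2}{\epsilon}\gamma_2 \,+\, \tfrac{2}{\epsilon}\delta_2\cdot O(n\log n+n\epsilon).
\]
Setting $\gamma_2=\gamma\epsilon/12$ and $\delta_2=\gamma/\textrm{poly}(n,\epsilon^{-1})$ gives bias at most $\gamma/3$ for each of $\hat T_2,\hat T_3$, and keeps the overall running time polynomial in $n$, $\epsilon^{-1}$, $\gamma^{-1}$ (the FPRAS cost depends on $\log(\max a_{ij}/\min a_{ij})=\epsilon/2$, which is polynomially bounded).

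The main technical delicacy is handling the non-linearity introduced by taking $\ln$ of a multiplicatively accurate permanent estimate: the $e^{\pm\gamma_2}$ guarantee translates cleanly into additive accuracy on the logarithm only on the good event, so we need a deterministic a priori bound on $|\ln\hat P|$ to control the contribution of the (low-probability) failure event to the bias. Truncating $\hat P$ into the known feasible range $[n!,n!e^{n\epsilon/2}]$ supplies exactly such a bound, and then making $\delta_2$ polynomially small in the size of this range renders the failure contribution negligible. Once this is done the three bias bounds combine by the triangle inequality to yield $|\E[\hat p_i]-p_i|\le\gamma$, as required.
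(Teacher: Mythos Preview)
Your proposal is correct and follows essentially the same approach as the paper: decompose $p_i$ into the expected-value term and the two log-permanent terms, estimate the former by a single draw from the approximate sampler and the latter two via the JSV FPRAS, and bound the bias from each piece separately. The only substantive difference is that you are more careful on two points the paper glosses over: you explicitly carry the $2/\epsilon$ factor through when converting multiplicative permanent accuracy into additive accuracy on $T_2,T_3$ (and set $\gamma_2$ accordingly), and you truncate the FPRAS output into the a~priori feasible range $[n!,\,n!e^{n\epsilon/2}]$ so that the failure event contributes a controlled amount to the bias. The paper simply asserts ``the total bias introduced if the FPRAS fails is at most $1$'' without such a safeguard, so your version is in fact the more rigorous of the two.
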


\begin{proof}
    By \pref{lem:fpras}, we can efficiently estimate
    $\textrm{perm}(A(v_i, v_{-i}))$ and $\textrm{perm}(A(\bm{0},
    v_{-i}))$ up to an multiplicative factor of $\exp(\frac{\gamma}{6})$
    with success probability at least $1 - \frac{\gamma}{6}$. Hence, we
    can compute $\ln(\textrm{perm}(A(v_i, v_{-i}))$ and
    $\ln(\textrm{perm}(A(\bm{0}, v_{-i})))$ up to additive bias of
    $\frac{\gamma}{6}$ with probability $1 - \frac{\gamma}{6}$. Note
    that the total bias introduced if the FPRAS fails is at most $1$ and
    that could happens with probability at most $\frac{\gamma}{6}$. So
    the total bias from estimating $\ln(\textrm{perm}(A(v_i, v_{-i})))$
    and $\ln(\textrm{perm}(A(\bm{0}, v_{-i})))$ is at most
    $\frac{\gamma}{2}$.

    It remains to compute an estimator for $\E_{r \sim
    \expo^{R_M}_\epsilon(\bv)}[v_i(r)]$ with bias less than
    $\frac{\gamma}{2}$. In order to do so, we will use the algorithm in
    \pref{lem:approxsampler} to sample an outcome $r^*$ from a
    distribution whose probability mass function differs from that of
    $\expo^{R_M}_\epsilon(\bv)$ by at most a $\exp(\frac{\gamma}{6})$
    factor point-wise, with success probability at least $1 -
    \frac{\gamma}{6}$. Then we will use $v_i(r^*)$ as our estimator.
    Note that conditioned on the sampler runs correctly, we have
    $$\abs{\E[v_i(r^*)] - \E_{r \sim \expo^{R_M}_\epsilon(\bv)} [v_i(r)]}
    \le \left(\exp\left(\frac{\gamma}{6}\right) - 1\right) \E_{r \sim
    \expo^{R_M}_\epsilon(\bv)}
    [v_i(r)] \le \left(\exp(\frac{\gamma}{6}) - 1\right) \le
    \frac{\gamma}{3} \enspace.$$

    Moreover, the maximum bias conditioned on the failure of the sampler
    is at most $1$, which happens with probability at most
    $\frac{\gamma}{6}$. So the total bias from the estimator for 
    $\E_{r \sim \expo^{R_M}_\epsilon(\bv)}[v_i(r)]$ is at most
    $\frac{\gamma}{2}$.
\end{proof}

\section{Lower Bound for Multi-Item Auction} \label{sec:lbmultiitem}

\begin{proof}[Proof of \pref{thm:lbmultiitem}] 
    Let us first define some notations. For any $j^* \in [n]$, we will
    let $e^{j^*}$ denote the valuation profile such that $e^{j^*}_j = 1$
    if $j = j^*$ and $e^{j^*}_j = 0$ if $j \ne j^*$. That is, an agent
    with valuation $e^{j^*}$ is single-minded who only value getting
    item $j^*$ (with value $1$) and has no interest in getting any other
    item. We will say $j^*$ is the {\em critical item} for this agent.

    Suppose $M$ is an $\epsilon$-differentially private mechanism such
    that $M$ always obtain at least $\opt - \frac{n}{10}$ expected
    social welfare. Let us consider the following randomly chosen
    instance: each agent's valuation is chosen from $e^1, \dots, e^n$
    independently and uniformly at random. Let us consider the social
    welfare we get by
    running mechanism $M$ on this randomly constructed instance. We
    first note that $\E_{\bv}[\opt(\bv)] = (1 - e^{-1}) n$ for that each
    item has probability $1 - e^{-1}$ of being the critical item of
    at least one of the agents. By our assumption, the expected welfare
    obtained by $M$ shall be at least $(1 - e^{-1})n - \frac{n}{10} >
    \frac{n}{2}$. Therefore, we have 
    $$\sum_{i=1}^n \sum_{j=1}^n \Pr[\text{$M$ allocate $j$ to agent
    $i$} \,|\, \text{$j$ is critical for $i$}] \Pr[\text{$j$ is
    critical for $i$}] \ge \frac{n}{2} \enspace.$$

    Note that $\Pr[\text{$j$ is critical for $i$}] = \frac{1}{n}$ for
    all $i, j \in [n]$, we get that the average probability that a
    critical item-agent pair is allocated is at least half: 
    \begin{equation} \label{eq:lbmatching1} 
        \frac{1}{n^2} \sum_{i=1}^n \sum_{j=1}^n \Pr[\text{$M$ allocate
        $j$ to agent $i$} \,|\, \text{$j$ is critical for $i$}] \ge
        \frac{1}{2} \enspace.  
    \end{equation}

    Similarly, we have
    $$\sum_{i=1}^n \sum_{j=1}^n \Pr[\text{$M$ allocate $j$
    to agent $i$} \,|\, \text{$j$ is not critical for $i$}]
    \Pr[\text{$j$ is not critical for $i$}] \le \frac{n}{2} \enspace.$$

    Note that $\Pr[\text{$j$ is not critical for $i$}] = \frac{n-1}{n}$
    for all $i, j \in [n]$, we get that the average probability that
    the average probability that a non-critical
    item-agent pair is chosen in the allocation is very small: 
    \begin{equation} \label{eq:lbmatching2}
        \frac{1}{n^2} \sum_{i=1}^n \sum_{j=1}^n \Pr[\text{$M$ allocate
        $j$ to agent $i$} \,|\, \text{$j$ is not critical for $i$}] \le
        \frac{1}{2(n-1)} \enspace.
    \end{equation}

    By \pref{eq:lbmatching1} and \pref{eq:lbmatching2}, we have
    $$\frac{ \sum_{i=1}^n \sum_{j=1}^n \Pr[\text{$M$ allocate $j$ to
    agent $i$} \,|\, \text{$j$ is critical for $i$}]}
    {\sum_{i=1}^n \sum_{j=1}^n \Pr[\text{$M$ allocate $j$ to agent $i$}
    \,|\, \text{$j$ is not critical for $i$}]} \ge n-1
    \enspace.$$

    In particular, we know there exists a $(i, j)$
    pair such that
    $$\frac{ \Pr[\text{$M$ allocate $j$ to
    agent $i$} \,|\, \text{$j$ is critical for $i$}]}
    { \Pr[\text{$M$ allocate $j$ to agent $i$}
    \,|\, \text{$j$ is not critical for $i$}]} \ge n-1
    \enspace.$$
    
    Since $M$ is
    $\epsilon$-differentially private, we get that $\exp(\epsilon) \ge
    n-1$, and thus $\epsilon = \Omega(\log n)$.
\end{proof}

\section{Lower Bound for Procurement Auction for Spanning Trees}
\label{app:lbspanningtree}

\begin{proof}[Proof of \pref{thm:lbspanningtree}]
    Suppose $M$ is an $\epsilon$-differentially private mechanism whose
    expected total cost is at most $\opt + \frac{k}{24}$.

    We will consider the following randomly generated instance. Each
    agent $i$'s cost value $c_i$ is independently chosen as 
    $$c_i = \left\{ \begin{aligned}
        & 1 & & \text{ , w.p. } 1 - \frac{1}{2k} \\
        & 0 & & \text{ , w.p. } \frac{1}{2k}
    \end{aligned}\right. $$

    If an agent has cost $0$, we say this agent and the corresponding
    edge are {\em critical}. Let us first analyze the expected value of
    $\opt$ for such randomly generated instances. Intuitively, we want
    to pick as many critical edges as possible. In particular, when
    there are no cycles consists of only critical edges, the minimum
    spanning tree shall pick all critical edges, which comprise a forest
    in the graph, and then pick some more edges to complete the spanning
    tree.

    \begin{lemma} \label{lem:st1}
        With probability at least $\frac{1}{2}$, there are no cycle
        consists of only critical edges.
    \end{lemma}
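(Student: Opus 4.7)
My plan is to apply a first-moment (union bound) argument over all potential cycles in $K_k$, using the fact that each edge is critical independently with probability $p = 1/(2k)$.

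First I would count, for each $\ell \in \{3, 4, \dots, k\}$, the number of simple cycles of length $\ell$ in the complete graph $K_k$. This count is $\binom{k}{\ell} \cdot \frac{(\ell-1)!}{2}$, which is bounded above by $\frac{k^{\ell}}{2\ell}$. Next, for each fixed cycle $C$ of length $\ell$, the probability that all $\ell$ edges of $C$ are critical is exactly $(1/(2k))^{\ell}$, by independence of the edge costs.

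Putting these together, the expected number of all-critical cycles is at most
$$\sum_{\ell=3}^{k} \frac{k^{\ell}}{2\ell} \cdot \left(\frac{1}{2k}\right)^{\ell} = \sum_{\ell=3}^{k} \frac{1}{2\ell \cdot 2^{\ell}} \le \sum_{\ell=3}^{\infty} \frac{1}{2^{\ell+1}} = \frac{1}{8}.$$
By Markov's inequality, the probability that there exists at least one cycle consisting only of critical edges is at most $\tfrac18 < \tfrac12$, so the complementary event (no such cycle exists) occurs with probability at least $\tfrac12$, which is exactly what the lemma claims.

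I do not anticipate a real obstacle here; the only subtlety worth double-checking is the counting of cycles and the verification that the geometric-style series in $\ell$ is genuinely bounded by a small constant (well below $\tfrac12$) so that the union bound has enough slack. Intuitively this is the subcritical regime of the Erdős--R\'enyi random graph $G(k, c/k)$ with $c = \tfrac12 < 1$, where it is classical that the random graph is almost surely a forest; the explicit union bound above is simply a clean finitary version of that fact tailored to the constants in this lemma.
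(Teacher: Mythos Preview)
Your proposal is correct and follows essentially the same approach as the paper: a union bound over all simple cycles of each length, using the bound $\binom{k}{\ell}(\ell-1)! \le k^{\ell}$ on the number of length-$\ell$ cycles and the independence of the edge costs. The paper's version is slightly looser (it drops the factors of $2$ and $\ell$ and sums $\sum_{t} 2^{-t} < \tfrac12$), while your tighter counting yields the sharper constant $\tfrac18$, but the argument is the same.
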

    
    \begin{proof}[Proof of \pref{lem:st1}]
        For each cycle of length $t$, the probability that all edges on
        this cycle are critical is $(2k)^{-t}$. 
        Note that the number of cycles of
        length $t$ is at most ${k \choose t} (t-1)! \le k^t$. Here ${k
        \choose t}$ is the number of subsets of $t$ vertices
        and $(t-1)!$ is the number of different Hamiltonian cycles among
        $t$ vertices. Hence, by union bound, the probability that there
        is any cycle consists of only critical edges is at most
        $\sum_{t=2}^k (2k)^{-t} \cdot k^t = \sum_{t=2}^k 2^{-t} <
        \frac{1}{2}$.
    \end{proof}

    Moreover, by Chernoff-H\"oeffding bound, we have that the number of
    critical edges is at least $\frac{k}{3}$ with probability at least
    $\frac{3}{4}$. 
    
    Therefore, by union bound, with probability at least
    $\frac{1}{4}$, we have that there are at least $\frac{k}{3}$
    critical edges and there are no cycle consists of only critical
    edges. So in this case, we have $\opt \le k - \frac{k}{3} =
    \frac{2k}{3}$. Therefore, the expectation of the optimal total cost
    is at most $\E[\opt] \le \frac{3}{4} k + \frac{1}{4} \frac{2k}{3} =
    \frac{11k}{12}$.
    
    By our assumption on $M$, we get that the expected total cost of the
    outcome chosen by $M$ is at most $\frac{11k}{12} + \frac{k}{24} =
    \frac{23k}{24}$. In other words, the expected number of critical
    edges chosen by $M$ is at least $\frac{k}{24}$. That is,
    $$\sum_{i=1}^n \Pr[\text{edge $i$ is chosen} \,|\, \text{edge $i$ is
    critical}] \Pr[\text{edge $i$ is critical}] \ge \frac{k}{24}
    \enspace.$$

    Note that $\Pr[\text{edge $i$ is critical}] = \frac{1}{2k}$ for all
    $i \in [n]$ and $n = {k \choose 2} = \frac{k(k-1)}{2}$, we get that
    on average a critical edge is chosen with at least constant
    probability
    $$\frac{1}{n} \sum_{i=1}^n \Pr[\text{edge $i$ is chosen} \,|\,
    \text{edge $i$ is critical}] \ge \frac{1}{6} \enspace.$$

    On the other hand, it is easy to see
    $$\sum_{i=1}^n \Pr[\text{edge $i$ is chosen} \,|\, \text{edge $i$ is
    not critical}] \Pr[\text{edge $i$ is not critical}] \le k
    \enspace.$$

    By $\Pr[\text{edge $i$ is not critical}] = 1 - \frac{1}{2k}$ and $n
    = {k \choose 2}$, we get that on average a non-critical edge is chosen with very small probability
    $$\frac{1}{n} \sum_{i=1}^n \Pr[\text{edge $i$ is chosen} \,|\,
    \text{edge $i$ is not critical}] \le \frac{2k^2}{(2k-1)n} =
    \frac{4k}{(k-1)(2k-1)} \le \frac{8}{2k-1} \enspace.$$

    Therefore, we have
    $$\frac{\sum_{i=1}^n \Pr[\text{edge $i$ is chosen} \,|\,
    \text{edge $i$ critical}]}{\sum_{i=1}^n \Pr[\text{edge $i$ is
    chosen} \,|\, \text{edge $i$ is not critical}]} \ge \frac{2k-1}{48}
    \enspace.$$

    In particular, there exists an agent $i$, such that
    $$\frac{\Pr[\text{edge $i$ is chosen} \,|\, \text{edge $i$
    critical}]}{\Pr[\text{edge $i$ is chosen} \,|\, \text{edge $i$ is
    not critical}]} \ge \frac{2k-1}{48} \enspace.$$

    However, the above amount is upper bounded by $\exp(\epsilon)$ since
    $M$ is $\epsilon$-differentially private. So we conclude that
    $\epsilon = \Omega(k)$.
\end{proof}

\end{document}